\numberwithin{equation}{section}
\newcommand{\RR}{\mathbb{R}}
\newcommand{\CC}{\mathbb{C}}
\newcommand{\NN}{\mathbb{N}}
\newcommand{\R}{\mathbb{R}}
\newcommand{\cG}{\mathcal{G}}
\newcommand{\cO}{\mathcal{O}}
\newcommand{\cL}{\mathcal{L}}
\newcommand{\cD}{\mathcal{D}}
\newcommand{\dist}{{\rm dist}}
\newtheorem{theorem}{Theorem}[section]
\newtheorem{lemma}[theorem]{Lemma}
\newtheorem{prop}[theorem]{Proposition}
\theoremstyle{remark}
\newtheorem{remark}[theorem]{Remark}
\DeclareMathOperator{\supp}{supp}
\DeclareMathOperator{\vol}{vol}
\DeclareMathOperator{\area}{area}
\begin{document}

\title[]{On eigenvalue asymptotics for strong $\delta$-interactions supported by surfaces with boundaries}

\author{Jaroslav Dittrich}
\address{Department of Theoretical Physics, Nuclear Physics Institute, Czech Academy of Sciences, 25068 \v Re\v z near Prague, Czechia, and
Doppler Institute for Mathematical Physics and Applied Mathematics, Czech Technical University, B\v rehov\'a 7, 11519 Prague, Czechia}
\email{dittrich@ujf.cas.cz}

\author{Pavel Exner}
\address{Department of Theoretical Physics, Nuclear Physics Institute, Czech Academy of Sciences, 25068 \v Re\v z near Prague, Czechia, and
Doppler Institute for Mathematical Physics and Applied Mathematics, Czech Technical University, B\v rehov\'a 7, 11519 Prague, Czechia}
\email{exner@ujf.cas.cz}

\author{Christian K\"uhn}
\address{Technische Universit\"at Graz, Institut f\"ur Numerische Mathematik, Steyrergasse 30, 8010 Graz, Austria}
\email{kuehn@tugraz.at}

\author{Konstantin Pankrashkin}
\address{Laboratoire de math\'ematiques d'Orsay (CNRS UMR 8628), Universit\'e Paris-Sud, B\^atiment 425, 91400 Orsay, France}
\email{konstantin.pankrashkin@math.u-psud.fr}

\thanks{Jaroslav Dittrich, Pavel Exner and Christian K\"uhn gratefully acknowledge a partial support of the Czech Science Foundation (GA\v{C}R), project 14-06818S, and the Austrian-Czech Republic cooperation grant CZ01/2013.
Konstantin Pankrashkin was partially supported by ANR NOSEVOL (ANR 2011 BS01019 01) and GDR CNRS 2279 DynQua. Christian K\"uhn gratefully acknowledges financial support by the Austrian Science Fund (FWF), project P 25162-N26.
}

%\date{22.06.2015}

\begin{abstract}
Let $S\subset\mathbb{R}^3$ be a $C^4$-smooth relatively compact orientable surface with a sufficiently regular boundary.
For $\beta\in\RR_+$, let $E_j(\beta)$ denote the $j$th negative eigenvalue of the operator associated with the quadratic
form
 % -------------- %
\[
H^1(\RR^3)\ni u\mapsto \iiint_{\mathbb{R}^3} |\nabla u|^2dx -\beta \iint_S |u|^2d\sigma,
\]
 % -------------- %
where $\sigma$ is the two-dimensional Hausdorff measure on $S$. We show that for each fixed $j$ one has the asymptotic expansion
 % -------------- %
\[
E_j(\beta)=-\dfrac{\beta^2}{4}+\mu^D_j+ o(1) \;\text{ as }\; \beta\to+\infty\,,
\]
 % -------------- %
where $\mu_j^D$ is the $j$th eigenvalue of the operator $-\Delta_S+K-M^2$ on $L^2(S)$, in which $K$ and $M$ are  the Gauss and mean curvatures, respectively, and $-\Delta_S$ is the Laplace-Beltrami
operator with the Dirichlet condition at the boundary of $S$. If, in addition, the boundary of $S$ is $C^2$-smooth, then the remainder estimate can be improved to
${\mathcal O}(\beta^{-1}\log\beta)$.
\end{abstract}

\maketitle

%%%%%%%%%%%%%%%%%%%%%%
\section{Introduction}

Schr\"{o}dinger operators with singular interaction are often used to produce solvable models of various sorts and to study relations between spectral properties and the geometry of the interaction support. While the case where the latter is a discrete point set was an object of interest for more than three decades \cite{AGHH}, relatively less attention has been paid to contact interactions supported by hypersurfaces in Euclidean spaces. The first general and mathematically rigorous analysis of such Schr\"{o}dinger operators with interaction support of codimension one was presented in \cite{BEKS}.
However, it was only in 2001, when a wave of interest to these operators started, its initial point being the paper \cite{EI}, motivated by the fact
that they can model semiconductor graph-like structures which, in contrast to the usual quantum graph theory \cite{BK}, make it possible to take quantum tunneling into account. At the same time interesting mathematical questions appeared. For a summary of the work done in the following few
years we refer the reader to the review paper \cite{Ex08} and the recent papers~\cite{BEL,BLL13,Loto, Mant}.

A question of a particular interest concerned strong coupling situations, i.e.\ the asymptotic behavior of eigenvalues when the coupling constant $\beta$ of an attractive singular interaction becomes large. A technique was devised in \cite{EY2, EY} which allowed to treat this problem for smooth curves $\Gamma$ in $\RR^2$  without endpoints by a combination of bracketing technique and a use of locally orthogonal coordinates in the vicinity of the support. In this way an asymptotic expansion was obtained in which a universal divergent first term was followed by an eigenvalue of a one-dimensional Schr\"odinger operator with the potential $-\gamma^2/4$ where $\gamma$ was the curvature of $\Gamma$. An analogous result was later derived for a closed smooth surface $S$ in $\RR^3$; in that case the comparison operator was two-dimensional and the curvature induced potential was $K-M^2$ where $K$ and $M$ were the Gauss and mean curvatures of $S$, respectively. We note in passing that similar asymptotic
formul{\ae} can be derived also for more
singular interactions such as those supported by curves in $\RR^3$, i.e.\ manifolds of codimension two, see~\cite{EK},
or $\delta'$-interactions supported by planar curves~\cite{exj}.

The method of \cite{EY} did not work, however, for manifolds with a boundary, or more exactly, it did work but it provided estimates too rough to characterize individual eigenvalues. The reason was that the used upper and lower bounds in this case yielded comparison operators with the same symbol but different boundary conditions, Dirichlet and Neumann. The conjecture was that the asymptotic formula still holds with the \emph{Dirichlet} comparison operator. This claim was indeed found valid: in \cite{EP} it was proved for non-closed smooth finite curves $\Gamma$ in $\RR^2$ by using bracketing in an extended neighborhood of $\Gamma$ in combination with a convolution-type expression of the eigenfunction. The aim of the present paper is to solve the analogous problem for finite smooth surfaces in $\RR^3$ with a boundary.

Let us describe the main results more rigorously.
Let $S\subset \RR^3$ be a $C^4$-smooth relatively compact orientable surface with a compact Lipschitz boundary $\partial S$.
In addition, we suppose that $S$ can be extended through the boundary, i.e.\ that there exists
a larger $C^4$-smooth surface $S_2$ such that $\overline{S}\subset{S_2}$, where $\overline{S}$ means the closure of $S$.
We consider the quadratic form
 % -------------- %
\[
h(u,u) = \iiint_{\RR^3}|\nabla u|^2dx -\beta \iint_S |u|^2\,d\sigma, \quad \cD(h)=H^1(\RR^3),
\]
 % -------------- %
where $\sigma$ is the two-dimensional Hausdorff measure on $S$. One can easily check (see \cite{BEKS}) that this form is closed in $L^2(\R^3)$
and semibounded from below, and hence it defines a unique self-adjoint operator $H$ in $L^2(\RR^3)$, semibounded from below.
In a suitable weak sense, the operator $H$ can be interpreted as the Laplacian with the boundary condition $[\partial_\nu u]=\beta u$ on $S$, where
$[\partial_\nu u]$ is the jump of the normal derivative on $S$, see e.g. \cite{BEKS} or \cite{Mant}.
Using the standard machineries one shows that the essential spectrum $\sigma_\mathrm{ess}(H)$ of $H$ is $[0,\infty)$
and that there is a finite number of negative eigenvalues $E_1(\beta)<E_2(\beta)\le\dots$, cf. \cite{BEKS}.

The embedding $S\subset \RR^3$ gives rise to a metric tensor $(g_{ab})$ on $S$ and to the contravariant tensor
$(g^{ab}):=(g_{ab})^{-1}$, and for the Hausdorff measure $\sigma$ we have $d\sigma(s)=\sqrt g ds$ with $g:=\det(g_{ab})$.
We will deal with the operator $L^D_S:=-\Delta_S+K-M^2$ on $S$ with the Dirichlet boundary condition at the boundary of~$S$,
with  $K$ and $M$ being respectively the Gauss and mean curvature on $S$.
More precisely, the operator $L^D_S$ is defined as the unique self-adjoint operator acting in $L^2(S):=L^2(S,\sigma)$
generated by the quadratic form
\begin{equation}
       \label{eq-lsd}
H^1_0(S)\ni u\mapsto \langle \partial_j u,g^{jk} \partial_k u\rangle_{L^2(S)}
+\langle u, (K-M^2)u\rangle_{L^2(S)}\,,
\end{equation}
where the Einstein convention for the indices is used. Due to the relative
compactness of $S$ in $S_2$, both $K$ and $M$ are bounded on $S$, and
the operator $L^D_S$ is semibounded from below and has a compact resolvent.
We denote by $\mu^D_j$, $j\in\NN$, its eigenvalues enumerated in the usual way.

The results of the paper can be summarized as follows:
\begin{theorem}\label{thmain}
Let $S$ be as above (i.e.\ relatively compact with a Lipschitz boundary and extendable through the boundary), then
for each fixed $j\in\NN$ one has
\begin{equation}
     \label{eq-eb1}
E_j(\beta)=-\beta^2/4 + \mu_j^D+o(1)\text{ as } \beta\to+\infty.
\end{equation}
If, in addition, $S$ has a $C^2$-boundary, then the remainder estimate can be replaced by $\cO(\beta^{-1}\log\beta)$.
\end{theorem}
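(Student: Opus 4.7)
My plan is to follow the two-sided min–max strategy used in \cite{EY} and \cite{EP}, combining the tubular-coordinate technique for closed surfaces with the boundary trick of \cite{EP}. First I would establish the upper bound, which I expect to be the easier half. For $i=1,\dots,j$ let $\varphi_i$ be the $i$-th normalized eigenfunction of $L^D_S$, and in tubular coordinates $(s,t)\in S\times(-\delta,\delta)$ arising from the exponential map of the unit normal of $S_2$ along $S$, consider trial functions
\[
u_i(s,t)=\varphi_i(s)\,\sqrt{\beta/2}\,e^{-\beta|t|/2}\,\chi(t/\delta),
\]
with $\chi$ a smooth cutoff and $\delta=\delta(\beta)\to 0$. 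Expanding the Jacobian $(1-2Mt+Kt^2)\sqrt{g}$ should produce the curvature-induced term $K-M^2$ in the Rayleigh quotient, and a direct computation will give $h(u_i,u_i)/\|u_i\|^2=-\beta^2/4+\mu_i^D+o(1)$. Since the $u_i$ are approximately orthogonal, the min–max principle yields $E_j(\beta)\le-\beta^2/4+\mu_j^D+o(1)$; under the $C^2$-boundary hypothesis a careful choice $\delta\sim\beta^{-1}\log\beta$ should deliver the sharper rate.

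For the lower bound I would enclose $S$ in a tubular neighborhood $\Omega_a$ of the \emph{extended} surface $S_2$, of half-width $a=a(\beta)\to 0$, and apply Dirichlet–Neumann bracketing: imposing Neumann conditions on $\partial\Omega_a$ yields $H\ge H^N_{\Omega_a}\oplus H^N_{\mathrm{ext}}$. The exterior piece has spectrum bounded below by $-Ca^{-2}$, which is $\gg-\beta^2/4$ once $a\ll\beta^{-1}$. Inside the tube, working in tubular coordinates on $S_2\times(-a,a)$ and unitarily removing the transverse Jacobian, the operator should take the form, modulo $\cO(a)$ curvature remainders,
\[
-\partial_t^2-\beta\delta(t)\mathbf{1}_S(s)-\Delta_{S_2}+K-M^2,
\]
where $\mathbf{1}_S$ is the indicator of $S$ inside $S_2$. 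Since the one-dimensional operator $-\partial_t^2-\beta\delta(t)$ on $(-a,a)$ with Neumann endpoints has ground state $-\beta^2/4+\cO(e^{-\beta a})$ with eigenfunction $\chi_\beta(t)\propto e^{-\beta|t|/2}$, projecting onto $\chi_\beta$ should reduce the $\Omega_a$-problem to an effective two-dimensional operator $\widetilde H_\beta$ on $L^2(S_2)$ of the form $L_{S_2}-\beta^2/4$ plus errors tending to zero.

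The main obstacle, and the place where the approach of \cite{EP} will be essential, is making the Dirichlet condition at $\partial S$ \emph{emerge} from this reduction: $\widetilde H_\beta$ is naturally comparable with the \emph{Neumann} problem on $S_2$, not the Dirichlet one, and a plain bracketing loses information at $\partial S$. To overcome this I would adapt the convolution-type representation of \cite{EP}: given an eigenfunction $u$ of $H$ with eigenvalue near $-\beta^2/4$, define $\psi(s):=\int u(s,t)\chi_\beta(t)\,dt$ for $s\in S_2$, and extract from the original quadratic form an integral control of $\psi$ on $S_2\setminus\overline{S}$, where the $-\beta^2/4$ bonus is absent and the exponential profile cannot survive. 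An estimate of the shape $\|\psi\|_{L^2(S_2\setminus\overline{S})}^2=o(1)\|\psi\|^2$, inserted into the min–max for $\widetilde H_\beta$, should allow one to replace $L_{S_2}$ by its Dirichlet restriction $L^D_S$ in the limit. I expect the technical heart of the argument to lie in this quantitative ``Dirichlet emergence'' step and in the careful tracking of the Jacobian and curvature remainders; the improved rate $\cO(\beta^{-1}\log\beta)$ under a $C^2$-boundary should follow from sharper trace and Hardy-type estimates near $\partial S$, combined with an optimal choice of the scales $a$ and $\delta$.
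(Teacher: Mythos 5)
Your upper bound is fine (explicit trial functions of the form $\varphi_i(s)e^{-\beta|t|/2}$ versus the paper's Dirichlet bracketing plus separation of variables are interchangeable; note, though, that the $\cO(\beta^{-1}\log\beta)$ upper bound already holds for a Lipschitz boundary, no $C^2$ needed), but the lower bound, which is the actual content of the theorem, has a genuine gap exactly at the step you call ``Dirichlet emergence''. First, a scale error: with Neumann bracketing the exterior piece is automatically nonnegative, and inside the tube you must take the half-width $a$ with $\beta a\to+\infty$ (in the paper $a\sim\beta^{-1}\log\beta$); with your choice $a\ll\beta^{-1}$ the transverse operator $-\partial_t^2-\beta\delta(t)$ on $(-a,a)$ with Neumann ends has ground state far below $-\beta^2/4$ (of order $-\beta/a$), so the reduction to an effective surface operator breaks down from the start. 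More seriously, the key step is only asserted, not argued: an estimate $\|\psi\|^2_{L^2(S_2\setminus\overline S)}=o(1)\|\psi\|^2$ for the transverse average of an eigenfunction does not by itself let you ``replace $L_{S_2}$ by its Dirichlet restriction $L^D_S$'' in the min--max: smallness of the $L^2$-mass outside $S$ gives no control of the trace on $\partial S$, and any cutoff argument near $\partial S$ pays a gradient price that you have not estimated; moreover such an argument could at best yield $o(1)$, never $\cO(\beta^{-1}\log\beta)$. This is precisely where the paper brings in two tools absent from your proposal. For the $o(1)$ statement it penalizes the transverse-ground-state component over $S_1\setminus\overline S$ by $\beta^2/5$ and then invokes the monotone convergence of non-densely defined quadratic forms (Weidmann), so that the limiting form domain is forced to be $H^1_0(S)\otimes\CC\varphi_1$ (here the Lipschitz regularity enters, to identify $H^1$-functions vanishing a.e.\ outside $S$ with $H^1_0(S)$); this abstract convergence is what converts ``small mass outside $S$'' into convergence to the Dirichlet eigenvalues, but only without a rate.

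For the improved remainder the paper's route is different from, and more concrete than, what you sketch: it uses the Birman--Schwinger-type integral representation $u(x)=\iint_S \frac{e^{-\sqrt{|\lambda|}|x-s|}}{4\pi|x-s|}h(s)\,d\sigma(s)$ with $h=\beta u|_S$ (this, not a transverse average, is the ``convolution-type expression'' of \cite{EP}), proves $\|h\|_{L^2(S)}=\cO(\beta^2)$ and hence pointwise exponential decay of $u$ away from $S$, cuts off with a logarithmic transition layer to show $\Lambda_j(K_\delta)-\beta^{-\mu}\le E_j(\beta)\le\Lambda_j(K_\delta)$ for the Dirichlet problem on the $\delta$-neighborhood $\Theta_\delta$, $\delta\sim\beta^{-1}\log\beta$, and then --- the decisive trick you are missing --- bounds $K_\delta$ from below by the Dirichlet operator on a tube over a slightly \emph{enlarged} surface $S_a\supset\overline S$, whose transverse separation gives $-\beta^2/4+\mu_j^D(a)+\cO(\beta^{-1}\log\beta)$, and finally uses domain-perturbation stability $\mu_j^D(a)=\mu_j^D+\cO(a)$, which is where the $C^2$ boundary is needed. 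In other words, the Dirichlet condition is not extracted by forcing the eigenfunction to vanish at $\partial S$; it is obtained on an enlarged surface and then transferred to $S$ by eigenvalue stability. Without either the form-convergence argument or this enlargement-plus-stability mechanism (your ``trace and Hardy-type estimates'' are not a substitute and are not developed), your lower bound does not go through at either level of precision.
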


The two remainder estimates are obtained by different methods. The proof of \eqref{eq-eb1} relies on the monotone convergence of non-densely defined quadratic forms  \cite{W}, which is a new tool in comparison to the previous papers on $\delta$-interactions.
The proof is contained in Sections~\ref{s: first}--\ref{s: weakrem}.
We note that the form convergence we use appears to be similar to that appearing in analysis of the strong coupling limit of operators $-\Delta+\lambda\chi_\Omega$, where $\chi_\Omega$ is the indicator function of a set $\Omega$, see~\cite{sim}. It is known that the regularity of $\Omega$ in this problem  plays an important role in calculating the convergence rate  of such operators \cite{vbr, bd}.
While the asymptotics~\eqref{eq-eb1} gives the expected result, we have only a weak estimate of the error term.
In order to obtain the same remainder estimate $\cO(\beta^{-1}\log\beta)$ as for closed surfaces,
we are adapting to the present situation the technique used in \cite{EP},
which is done in the second part of the paper, from Section~\ref{s: kernel} on, and which appears to be rather
technically involved compared to the two-dimensional case.

\section{First steps}
\label{s: first}

\subsection{Analysis on thin neighborhoods}
\label{s: thin}

We remark that due to the properties of $S$ we can pick an ``intermediate'' surface $S_1$ which is relatively compact, $C^4$, with a Lipschitz boundary,
and such that $\overline S\subset S_1 \subset \overline{S_1}\subset S_2$.
Furthermore, let $S_2\ni s\mapsto \nu(s)$ be a smooth unit normal on $S_2$.

It is a well-known fact of the differential geometry that for a sufficiently small $a>0$, the map
\begin{equation}
        \label{eq-sff}
S_1\times(-a,a)\ni (s,t)\mapsto F (s,t)=s+t \nu(s)\in \RR^3
\end{equation}
is a diffeomorphism between $S_1\times(-a,a)$ and its image and can be continued to the boundary.
We  introduce the following sets (we omit the dependence on $a$):
\[
U_1:=F \big(S_1\times(-a,a)\big), \quad
U:= F \big(S\times(-a,a)\big).
\]
 % -------------- %
and consider the following two quadratic forms,
 % -------------- %
\begin{align}
h^N_1(u,u)&:=\iiint_{U_1} |\nabla u|^2dx-\beta \iint_S |u|^2 \,d\sigma\,, & \cD(h^N_1)&=H^1(U_1)\,,\\
\label{HDdef}
h^D (u,u)&:=\iiint_{U} |\nabla u|^2dx-\beta \iint_S |u|^2 \,d\sigma\,, & \cD(h^D)&=H^1_0(U)\,,
\end{align}
 % -------------- %
and denote by $H^N_1$ and $H^D$ the associated self-adjoint operators acting respectively in $L^2(U_1)$ and $L^2(U)$. In the sense of forms we have then the inequalities $H^N_1\oplus (-\Delta^N_1)\le H \le H^D\oplus (-\Delta^D)$, where $-\Delta^N_1$ is the Neumann Laplacian in $L^2(\RR^3\setminus \Bar U_1)$ and $-\Delta^D$ is the Dirichlet Laplacian in $L^2(\RR^3\setminus \Bar U)$.
As both $-\Delta^N_1$ and $-\Delta^D$ are non-negative, to assess the negative spectrum it is sufficient to compare the negative eigenvalues of $H$ with those of $H^N_1$ and $H^D$ which have both compact resolvents.

Through the text, for a semibounded from below self-adjoint operator $A$ we denote by $\Lambda_j(A)$ its $j$th eigenvalue (provided it exists).
Then the above consideration gives the inequalites
\[
\Lambda_j(H^N_1)\le E_j(\beta) \equiv \Lambda_j(H) \le \Lambda_j(H^D)\,,
\]
valid (at least) for the indices $j$ for which the right-hand side is negative.
Using the above diffeomorphism $F$ we introduce unitary tranformations $\phi$ and $\phi_1$
as follows:
 % -------------- %
\[
\phi_1: L^2(U_1) \to L^2(S_1\times (-a,a), d\sigma\,dt) \,,\quad
(\phi_1 f)(s,t):=\sqrt{\big(1+k_1(s) t\big)\big(1+k_2(s) t\big)}\,f\big(F(s,t)\big)
\]
with $k_1$ and $k_2$ being the principal curvatures, and $\phi$ is defined analogously with $U_1$ and $S_1$ replaced by $U$ and $S$.
A standard computation, see e.g.~\cite{DEK,ex},
shows that, in the sense of forms, one can estimate $B^N_1\le \phi_1 H^N_1 \phi^{-1}_1$ and $\phi H^D \phi^{-1}\le B^D$, where $B^N_1$ and $B^{D}$ are the self-adjoint operators, acting respectively in $L^2(\Sigma_1):=L^2(\Sigma_1,d\sigma\,dt)$
 and $L^2(\Sigma):=L^2(\Sigma, d\sigma\,dt)$, where
\[
\Sigma_1:=S_1\times(-a,a), \quad \Sigma:=S\times(-a,a),
\]
 associated respectively with the forms $b^N_1$ and $b^D$,
 % -------------- %
\[
\begin{aligned}
b^N_1 (u,u)=\,&(1-Ca)\Big(\langle \partial_j u,g^{jk}\partial_k u\rangle_{L^2(\Sigma_1)}
+ \langle u,(K-M^2)u\rangle_{L^2(\Sigma_1)}\Big) + \|\partial_t u\|^2_{L^2(\Sigma_1)}\\
&-\beta\iint_S |u(s,0)|^2d\sigma -
Ca \|u\|^2_{L^2(\Sigma_1)} -C\iint_{S_1} \Big(
\big|u(s,-a)\big|^2+\big|u(s,a)\big|^2
\Big)d\sigma\,,
\\
\cD(b^N_1)&=H^1(\Sigma_1),
\\
b^D (u,u)=\,&(1+Ca)\Big(\langle \partial_j u,g^{jk}\partial_k u\rangle_{L^2(\Sigma)}
+\langle u,(K-M^2)u\rangle_{L^2(\Sigma)} \Big) + \|\partial_t u\|^2_{L^2(\Sigma)}\\
& -\beta\iint_S |u(s,0)|^2d\sigma + Ca \|u\|^2_{L^2(\Sigma)}\,,\\
\cD(b^D)&=H^1_0(\Sigma)\,,
\end{aligned}
\]
 % -------------- %
where $C>0$ is  independent of $a$ and $\beta$.
We remark again that  $B^N_1$ and $B^D$ have compact resolvents, and we arrive at the inequalities
 % -------------- %
\begin{equation}
      \label{eq-lll}
\Lambda_j(B^N_1)\le \Lambda_j(H)\le \Lambda_j(B^D) \quad \text{ if } \quad \Lambda_j(B^D)<0.
\end{equation}
 % -------------- %
The sought asymptotic expansions will arise from estimating the both bounds of~\eqref{eq-lll} which we will do
in the subsequent sections.

%%%%%%%%%%%%%%%%%%%%
\subsection{Separation of variables and upper bound}
\label{s: majoration}

The upper bound of $E_j(\beta)$ in terms of $\mu_j^D$ is relatively easy, and the
right-hand side of \eqref{eq-lll}  was in fact already estimated in~\cite{ex}.
We repeat the construction here for the sake of completeness:

\begin{lemma}\label{lem1}
Assume that $S$ has a compact Lipschitz boundary, then
for any fixed $j\in\NN$ there holds
\[
E_j(\beta)\le-\dfrac{1}{4}\,\beta^2+\mu^D_j + \cO\Big(\dfrac{\log\beta}{\beta}\Big) \text{ as } \beta\to+\infty.
\]
\end{lemma}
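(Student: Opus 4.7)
The plan is to apply the min–max principle to the right-hand side of the two-sided bound \eqref{eq-lll} by constructing an explicit $j$-dimensional trial subspace in $\cD(b^D)=H^1_0(\Sigma)$ of separated-variable form. Let $\varphi_1,\dots,\varphi_j$ be an $L^2(S)$-orthonormal system of eigenfunctions of $L^D_S$ for the eigenvalues $\mu^D_1,\dots,\mu^D_j$, and let $\chi_\beta$ be the symmetric ground-state eigenfunction of the one-dimensional operator $-\partial_t^2-\beta\delta_0$ on $(-a,a)$ with Dirichlet conditions at $\pm a$. An elementary computation yields $\chi_\beta(t)=\sinh\bigl(\kappa(a-|t|)\bigr)$ with associated eigenvalue $-\kappa^2$, where $\kappa>0$ is the unique positive solution of $\beta=2\kappa\coth(\kappa a)$. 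Since each $\varphi_k\in H^1_0(S)$ and $\chi_\beta(\pm a)=0$, the products $u_k(s,t):=\varphi_k(s)\,\chi_\beta(t)/\|\chi_\beta\|_{L^2(-a,a)}$ lie in $H^1_0(\Sigma)$ and are orthonormal in $L^2(\Sigma)$.

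For an arbitrary linear combination $u=\psi(s)\chi_\beta(t)/\|\chi_\beta\|$ with $\psi=\sum_{k=1}^j c_k\varphi_k$, the separated structure of $b^D$ gives
\[
b^D(u,u)=-\kappa^2\|\psi\|^2_{L^2(S)}+(1+Ca)\Bigl(\langle\partial_j\psi,g^{jk}\partial_k\psi\rangle_{L^2(S)}+\langle\psi,(K-M^2)\psi\rangle_{L^2(S)}\Bigr)+Ca\,\|\psi\|^2_{L^2(S)},
\]
where the $-\kappa^2$ comes from $(\int(\chi_\beta')^2-\beta\chi_\beta(0)^2)/\|\chi_\beta\|^2$ (checked directly from the 1D eigenvalue equation). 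Restricting to the span of $\varphi_1,\dots,\varphi_j$, the bracket in the middle is at most $\mu^D_j\,\|\psi\|^2_{L^2(S)}$ by the variational characterization of eigenvalues of $L^D_S$, so the min–max principle yields
\[
\Lambda_j(B^D)\le -\kappa^2+(1+Ca)\mu^D_j+Ca.
\]

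It remains to balance the two sources of error by letting $a$ depend on $\beta$. From $\beta=2\kappa\coth(\kappa a)$ one gets $\kappa=(\beta/2)\tanh(\kappa a)=\beta/2+\cO(\beta e^{-\beta a})$, hence $\kappa^2=\beta^2/4+\cO(\beta^2 e^{-\beta a})$. Choosing $a=a(\beta)=c\beta^{-1}\log\beta$ with $c$ large enough (say $c\ge 3$) makes the exponential contribution $\cO(\beta^{-1})$, while the remaining terms $(1+Ca)\mu^D_j+Ca$ give $\mu^D_j+\cO(\beta^{-1}\log\beta)$. Combined with \eqref{eq-lll}, this yields the claimed upper bound. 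I expect the only non-routine point to be the optimization of $a$: a $\beta$-independent $a$ would leave a fixed $\cO(a)$ residue and destroy the logarithmic rate, while shrinking $a$ too fast would ruin the exponential closeness of $\kappa$ to $\beta/2$, and the scale $a\sim\beta^{-1}\log\beta$ is precisely the one that equalizes the two competing errors.
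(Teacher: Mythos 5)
Your argument is correct and follows essentially the same route as the paper: both exploit the exact tensor decomposition $B^D=\widetilde L^D_S\otimes 1+1\otimes T^D$ (equivalently, separated trial functions $\varphi_k(s)\chi_\beta(t)$ built from the Dirichlet eigenfunctions of $L^D_S$ and the transverse ground state), bound $\Lambda_j(B^D)$ by min--max, and then choose $a\sim\beta^{-1}\log\beta$ to balance the exponential transverse error against the $\cO(a)$ terms before inserting the result into \eqref{eq-lll}. The only difference is that you solve the transverse problem explicitly via $\beta=2\kappa\coth(\kappa a)$, whereas the paper cites \cite[Proposition~2.4]{EY} for the bound $-\beta^2/4\le\Lambda_1(T^D)\le-\beta^2/4+2\beta^2e^{-\beta a/2}$, which is an inessential variation.
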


In order to prove Lemma~\ref{lem1} we remark that one can represent $B^D:=\widetilde L^D_S \otimes 1+ 1\otimes T^D$, where $\widetilde L^D_S=(1+Ca)L^D_S+Ca$ and $T^D$ is the self-adjoint operator in $L^2(-a,a)$ associated with the quadratic form
 % -------------- %
\[
H^1_0(-a,a)\ni v\mapsto \int_{-a}^a\big|v'(t)\big|^2dt-\beta \big|v(0)\big|^2.
\]
By \cite[Proposition 2.4]{EY}, for $\beta a> 8/3$ the operator $T^D$ has a unique negative eigenvalue,
and
\[
-\dfrac{\beta^2}{4}\le\Lambda_1(T^D)\le-\dfrac{\beta^2}{4}+2\beta^2 e^{-\beta a/2}.
\]
At the same time we have $\Lambda_j(\widetilde L^D_S)=(1+Ca)\mu^D_j+C a$, and $\mu_j^D$ do not depend on $a$.
Therefore, if $a$ is small and both $\beta$ and $\beta a$ are large, one has
\begin{equation}
     \label{eq-bdd}
\Lambda_j(B^D) \le -\dfrac{\beta^2}{4}+2\beta^2 e^{-\beta a/2} + \mu_j^D + Ca (1+\mu^D_j)
\end{equation}
for all $j$ with $(1+Ca)\mu^D_j+C a \le -\Lambda_1(T^D)$.
 % -------------- %
Therefore, if $j$ is fixed, $\beta$ is large and
 % -------------- %
\begin{equation}
  \label{eq-aa}
a:=\xi \beta^{-1}\log\beta, \quad \xi\ge 6,
\end{equation}
 % -------------- %
we have
\begin{equation}
     \label{eq-min}
\Lambda_j(B^D)\le-\dfrac{1}{4}\,\beta^2+\mu^D_j + \cO\Big(\dfrac{\log\beta}{\beta}\Big),
\end{equation}
 % -------------- %
and Lemma~\ref{lem1} is obtained by the substitution into \eqref{eq-lll}.
 % -------------- %

\begin{remark}\label{rem3}
For a later use, we remark that a similar approach can be applied to other related operators.
Namely, for sufficiently small $a>0$ (supposed to be less than one in order to avoid a notation conflict), consider the surface
\[
S_a=\{z\in S_1:\, \dist_{S_1}(z,S)<a\},
\]
where $\dist_{S_1}$ is the distance measured along the geodesics of $S_1$, and denote
\begin{equation}
 \label{eq-xia}
\Xi_a= F\big(S_a\times (-a,a)\big).
\end{equation}
Denote by $H^D_a$ the self-adjoint operator acting in $L^2(\Xi_a)$
generated by the quadratic form
\[
h^D_a(u,u)=\iiint_{\Xi_a} |\nabla u|^2 ds -\beta \iint_{S_a} |u|^2 d\sigma,
\quad \cD (h^D_a)=H^1_0(\Xi_a),
\]
then using the same computations one obtains, with $a$ chosen as \eqref{eq-aa},
\begin{equation}
    \label{eq-hda}
\Lambda_j(H^D_a)= -\dfrac{1}{4}\,\beta^2+\mu^D_j(a) + \cO\Big(\dfrac{\log\beta}{\beta}\Big),
\end{equation}
where $\mu^D_j(a)$ is the $j$th Dirichlet eigenvalue of $-\Delta_S+K-M^2$
on $S_a$.
\end{remark}

%%%%%%%%%%%%%%%%%%%%%%%%%%%%%%%%%%%%%%%%%%%

\section{Eigenvalue asymptotics with a weak remainder estimate}
\label{s: weakrem}

In the present section we are going to prove the first part of Theorem~\ref{thmain}:
 % -------------- %
\begin{prop}\label{thm1}
Assume that $S$ has a compact Lipschitz boundary, then
for any fixed $j\in\NN$ there holds $\,E_j(\beta)=-\beta^2/4 +\mu_j^D+o(1)$ as $\,\beta\to+\infty$.
\end{prop}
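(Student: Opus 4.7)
The upper bound $E_j(\beta) \le -\beta^2/4 + \mu_j^D + o(1)$ is already contained in Lemma~\ref{lem1}, so only the matching lower bound remains. By the left inequality in~\eqref{eq-lll}, it suffices to show that for each fixed $j\in\NN$,
\[
\Lambda_j(B^N_1) \ge -\tfrac{1}{4}\beta^2 + \mu_j^D - o(1), \quad \beta\to+\infty.
\]
The essential difficulty is an asymmetry in the form $b^N_1$: it is defined on the wider cylinder $\Sigma_1=S_1\times(-a,a)$ with Neumann-type conditions at $\partial S_1\times(-a,a)$, whereas the attractive $\delta$-interaction of strength $\beta$ is supported only on $S\subsetneq S_1$. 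Heuristically, any function with non-negligible mass over $(S_1\setminus S)\times(-a,a)$ pays an energy of order $\beta^2/4$ with no compensating attractive term, so in the strong-coupling limit the low-lying eigenfunctions of $B^N_1$ must concentrate on $S\times\{0\}$ and acquire an effective Dirichlet condition at $\partial S$.

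To make this precise, the plan is as follows. Fix $a>0$ small and introduce the shifted form
\[
q_\beta(u,u) := b^N_1(u,u) + \tfrac{1}{4}\beta^2\,\|u\|^2_{L^2(\Sigma_1)}\,,\qquad \cD(q_\beta)=H^1(\Sigma_1),
\]
so that $\Lambda_j(B^N_1)+\beta^2/4=\Lambda_j(Q_\beta)$, where $Q_\beta$ is the associated self-adjoint operator. Decomposing a test function as $u(s,t)=f(s)\psi_\beta(t)+u^\perp(s,t)$, with $\psi_\beta(t)=\sqrt{\beta/2}\,e^{-\beta|t|/2}$ the normalised ground state of the one-dimensional $\delta$-problem on $\RR$ (suitably truncated to $(-a,a)$), one checks that, up to lower-order errors, the $f$-component contributes the form associated with $L^D_S$ while $u^\perp$ contributes non-negative energy. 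Any sequence $u_\beta\in H^1(\Sigma_1)$ with $q_\beta(u_\beta,u_\beta)\le C$ and $\|u_\beta\|_{L^2(\Sigma_1)}\le 1$ is therefore forced to concentrate asymptotically in the $f\cdot\psi_\beta$ mode over $S$, and any limiting profile $f$ must belong to $H^1_0(S)$ because the unbalanced cost over $(S_1\setminus S)\times(-a,a)$ grows like $\beta^2/4$. Appealing to Weidmann's theorem~\cite{W} on the monotone convergence of (possibly non-densely defined) closed quadratic forms, together with the min--max principle, then yields $\Lambda_j(Q_\beta)\to\mu_j^D$, which is the required lower bound.

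The main obstacle lies in making this form-convergence argument rigorous. Since $\psi_\beta$ has no $L^2(-a,a)$ limit, the natural limiting form $q_\infty$ is supported on a subset of $L^2(\Sigma_1)$ corresponding to functions concentrated on the two-dimensional surface $S\times\{0\}$, and it is for precisely this reason that the non-densely-defined version of Weidmann's theorem is needed. One has to (i) verify an appropriate monotonicity of $\{q_\beta\}$ up to $\beta$-independent corrections, (ii) identify the limit $q_\infty$ unambiguously as the form of $L^D_S$ on the $\psi_\beta$-mode, and (iii) control the additional boundary terms $-C\iint_{S_1}(|u(s,-a)|^2+|u(s,a)|^2)\,d\sigma$ present in $b^N_1$, while keeping $a$ small enough that the $Ca\|u\|^2$ losses do not spoil the final $o(1)$ estimate. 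The decisive qualitative point, and what distinguishes this analysis from the closed-surface case of~\cite{ex}, is the extraction of the Dirichlet condition $f|_{\partial S}=0$ in the limit, which the Weidmann framework captures precisely because it permits a strict shrinking of the form domain in the limit.
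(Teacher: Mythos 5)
Your overall strategy coincides with the paper's: lower bound via the left side of \eqref{eq-lll}, projection onto the transverse ground-state mode with a $\beta^2$-order penalty both for the orthogonal complement and for mass sitting over $\Omega\times(-a,a)$, and Weidmann's monotone convergence of non-densely defined forms to extract the Dirichlet condition from the vanishing of an $H^1(S_1)$ function on $\Omega$ (this is where the Lipschitz hypothesis enters). But the plan stops exactly where the real work lies, and the step you defer is not a routine verification. Weidmann's theorem requires a monotone family of closed forms in one \emph{fixed} Hilbert space, whereas your forms $q_\beta$ are built from a decomposition along the $\beta$-dependent, truncated mode $\psi_\beta$; monotonicity of such a family is not apparent (the mode itself moves with $\beta$), and ``monotone up to $\beta$-independent corrections'' is not a hypothesis the theorem accepts. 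The paper's resolution is precisely the device missing from your sketch: one works with the genuine ground state $\varphi_1$ of the auxiliary transverse operator $T^N$ (whose Robin-type boundary terms also absorb the $-C(|u(s,\pm a)|^2)$ contributions in $b^N_1$, and whose spectral gap yields the clean penalty $\Lambda_2(T^N)-\Lambda_1(T^N)\ge\beta^2/5$), and then transplants everything by the unitary $\Theta=1\otimes\theta$, $\theta:L^2(-a,a)\to\ell^2(\NN)$, onto the fixed space $\cG=L^2(S_1)\otimes\ell^2(\NN)$, where the transverse ground mode becomes the fixed vector $e_1$ and the forms $\widehat c_\beta$ of \eqref{eq-chat} are genuinely monotone for $\beta\ge\beta_0$.

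A second gap: strong resolvent convergence furnished by \cite{W} does not by itself give convergence of eigenvalues, and the min-max principle alone will not close this; you must also prevent essential spectrum from interfering at the relevant energies. The paper does this by comparison with the decoupled operator $C'_\beta=L^N_S\otimes 1+\tfrac{\beta^2}{5}\,1\otimes(1-P_1)$, which yields $\inf\sigma_{\mathrm{ess}}(\widehat C_\beta)\to+\infty$, see \eqref{eq-bc3}, and then applies the eigenvalue-convergence result for monotone families \cite{W2}, together with the operator inequality $\tau^*(Q+E)^{-1}\tau\le(\widehat C_\beta+E)^{-1}$. Finally, a smaller but real point: with $a$ fixed, the $Ca$-losses in $b^N_1$ produce an $\cO(a)$ error, not $o(1)$; you must either take $a=a(\beta)\to0$ as in \eqref{eq-aa} (the paper's choice, which remains compatible with monotonicity after the $\Theta$-transformation) or add a final limit $a\to0$ using that the bound holds for every small $a$. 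None of these issues is fatal to your plan --- they are exactly what the paper supplies --- but in the proposal they are acknowledged rather than carried out, so the argument is incomplete at its decisive points.
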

 % -------------- %

We recall that in this claim we have no control over the remainder. On the other hand, we impose here quite weak assumptions concerning the regularity of the boundary of $S$, and our approach is quite robust; we expect that it can be adapted to similar problems like the strongly attractive $\delta'$-interactions with minimal effort. In view of Lemma~\ref{lem1} we just need to establish a suitable lower bound for $E_j(\beta)$, which is done
in the rest of the section.

\subsection{A one-dimensional operator}

Let us keep the choice made in~\eqref{eq-aa} for $a$ and denote by $T^N$ the self-adjoint operator in $L^2(-a,a)$ associated with the quadratic form
 % -------------- %
\[
t^N(v,v)=\int_{-a}^a|v'|^2dt-\beta \big|v(0)\big|^2 -C \Big(
\big|v(-a)\big|^2+\big|v(a)\big|^2\Big)\,, \quad \cD(t^N)=H^1(-a,a)\,.
\]
 % -------------- %
As shown in~\cite[Proposition~2.5]{EY}, for $\beta\to+\infty$ we have
 % -------------- %
\begin{equation}
      \label{eq-pi}
\Lambda_1(T^N)=-\dfrac{1}{4}\,\beta^2+ \cO\Big(\dfrac 1 \beta\Big)\,,
\;\;
\Lambda_2(T^N)\ge 0\,.
\end{equation}
 % -------------- %
Let $\varphi_j$ denote normalized eigenfunctions corresponding to the eigenvalues $\Lambda_j(T^N)$ and let $P_j$ be the orthogonal projectors on $\CC\varphi_j$ in $L^2(-a,a)$; we recall that all the eigenvalues of $T^N$ are simple. In virtue of the spectral theorem for self-adjoint operators
we have the inequality
 % -------------- %
\begin{align*}
t^N(v,v)&\geq  \Lambda_1(T^N) \|P_1v\|^2+ \Lambda_2(T^N) \|(1-P_1)v\|^2 \\
&= \Lambda_1(T^N) \|v\|^2+\big(\Lambda_2(T^N)-\Lambda_1(T^N)\big) \|(1-P_1)v\|^2\,, \quad v\in \cD(t^N)\,,
\end{align*}
 % -------------- %
which can be rewritten as
 % -------------- %
\begin{multline}
         \label{eq-tn}
\int_{-a}^a |v'|^2dt-\beta |v(0)|^2- C \Big(
\big|v(-a)\big|^2+\big|v(a)\big|^2\Big)-\Lambda_1 (T^N) \int_{-a}^a|v|^2\,dt\\
\ge
\big(\Lambda_2(T^N)-\Lambda_1(T^N)\big)
\int_{-a}^a \big|(1-P_1)v\big|^2 dt\,, \quad  v\in H^1(-a,a)\,.
\end{multline}
 % -------------- %

\subsection{Minoration of the quadratic form}

We denote
 % -------------- %
\[
\Omega:=S_1\setminus \overline{\mathstrut S}\,, \quad \Xi:=\Omega\times(-a,a)\equiv \Sigma_1\setminus\overline{\mathstrut\Sigma}\,.
\]
 % -------------- %
By regrouping the terms in the expression for $b^N_1$
we obtain
 % -------------- %
\begin{multline}
b^N_1 (u,u)-\Lambda_1(T^N)\|u\|^2_{L^2(\Sigma_1)}\\
\begin{aligned}
=&\,(1-Ca)\langle \partial_j u,g^{jk}\partial_k u\rangle_{L^2(\Sigma_1)} + \langle u,(K-M^2)u\rangle_{L^2(\Sigma_1)}
\\
&+ \|\partial_t u\|^2_{L^2(\Sigma)}-\beta\iint_S |u(s,0)|^2d\sigma \\
&\qquad\qquad-C\iint_S \Big(
\big|u(s,-a)\big|^2+\big|u(s,a)\big|^2\Big)d\sigma-\Lambda_1(T^N)\|u\|^2_{L^2(\Sigma)}
\\
& +  \|\partial_t u\|^2_{L^2(\Xi)} -C\iint_{\Omega} \Big(\big|u(s,-a)\big|^2+\big|u(s,a)\big|^2\Big)d\sigma-\Lambda_1(T^N)\|u\|^2_{L^2(\Xi)}
\\
&- Ca \langle u,(K-M^2+1)u\rangle_{L^2(\Sigma_1)}.
\end{aligned}
 \label{eq-bn}
\end{multline}
 % -------------- %
To simplify some terms on the right-hand side  we use the identification $L^2(\Sigma_1)\equiv L^2(S_1)\otimes L^2(-a,a)$. Consider the orthogonal projector $\Pi_1:=1\otimes P_1$ in $L^2(\Sigma_1)$. Using first \eqref{eq-tn} and then the asymptotics~\eqref{eq-pi}, for any $u\in H^1(\Sigma)$ we obtain, as $\beta$ is large,
 % -------------- %
\begin{multline}
\|\partial_t u\|^2_{L^2(\Sigma)}-\beta\iint_S \big|u(s,0)\big|^2d\sigma
-C\iint_S \Big(\big|u(s,-a)\big|^2+\big|u(s,a)\big|^2 \Big)d\sigma  -\Lambda_1 (T^N) \|u\|^2_{L^2(\Sigma)} \\
\begin{aligned}
&\ge \big(\Lambda_2(T^N)-\Lambda_1(T^N)\big)\|(1-\Pi_1)u\|^2_{L^2(\Sigma)}\\
&\ge \dfrac{\beta^2}{5}\|(1-\Pi_1)u\|^2_{L^2(\Sigma)}\,. \label{est1}
\end{aligned}
\end{multline}
 % -------------- %
Furthermore, using the Sobolev inequality~\cite[Lemma~8]{kuch},
 % -------------- %
\[
\big|v(-a)\big|^2+\big|v(a)\big|^2 \le 4a \|v'\|^2_{L^2(-a,a)}+2a^{-1}\|v\|^2_{L^2(-a,a)}
\;\text{ for any } v\in H^1(-a,a)\,,
\]
 % -------------- %
and then the asymptotics~\eqref{eq-pi}, which gives
 % -------------- %
\begin{multline}
\|\partial_t u\|^2_{L^2(\Xi)}-
C\iint_{\Omega} \Big(
\big|u(s,-a)\big|^2+\big|u(s,a)\big|^2
\Big)d\sigma-\Lambda_1(T^N)\|u\|^2_{L^2(\Xi)}\\
\begin{aligned}
&\ge (1-4Ca)\|\partial_t u\|^2_{L^2(\Xi)}-2Ca^{-1}\|u\|^2_{L^2(\Xi)} -\Lambda_1(T^N)\|u\|^2_{L^2(\Xi)}
\\
&\ge \Big(-\Lambda_1(T^N)- \dfrac{2 C\beta}{\xi\log \beta}\Big)\|u\|^2_{L^2(\Xi)}
\\
&\ge \dfrac{\beta^2}{5}\|u\|^2_{L^2(\Xi)}
\\
&=\dfrac{\beta^2}{5}\|\Pi_1 u\|^2_{L^2(\Xi)}+\dfrac{\beta^2}{5}\|(1-\Pi_1)u\|^2_{L^2(\Xi)}\,.
\end{aligned}
   \label{est2}
\end{multline}
 % -------------- %
Finally, we have
 % -------------- %
\begin{equation}
\label{eq-ee}
\big| \langle u,(K-M^2+1)u\rangle_{L^2(\Sigma_1)}\big|\le E \|u\|^2_{L^2(\Sigma_1)}
\end{equation}
 % -------------- %
with $E:=\|K-M^2\|_{L^\infty(S_1)}+2$. Substituting now \eqref{est1}, \eqref{est2} and \eqref{eq-ee} into~\eqref{eq-bn} we conclude that for large $\beta$ one has
 % -------------- %
\begin{equation}
    \label{eq-bc}
b^N_1-\Lambda_1(T^N)\ge c_\beta\,,
\end{equation}
 % -------------- %
where $c_\beta$ is the quadratic form given by
 % -------------- %
\begin{multline*}
\hspace{-.5em} c_\beta(u,u):=(1-Ca)\langle \partial_j u,g^{jk}\partial_k u\rangle_{L^2(\Sigma_1)}
+ \langle u,(K-M^2)u\rangle_{L^2(\Sigma_1)}-Ca E \|u\|^2_{L^2(\Sigma_1)}\\
+\dfrac{\beta^2}{5}\Big(
\|(1-\Pi_1)u\|^2_{L^2(\Sigma_1)}+ \|\Pi_1 u\|^2_{L^2(\Xi)}\Big)\,, \quad
\cD(c_\beta)=H^1(S_1)\otimes L^2(-a,a)\,.
\end{multline*}
 % -------------- %
Denote by $C_\beta,\: \beta>0,$ the self-adjoint operators associated with $c_\beta$. Since our argument involves
the min-max principle we have to pay attention to the essential spectrum, noting that these operators have no longer compact resolvents. To estimate the essential spectrum threshold, we simply drop the last non-negative summand and write $c_\beta\ge c'_\beta$ with
 % -------------- %
\begin{multline*}
\hspace{-.5em} c'_\beta(u,u):=
(1-Ca)\langle \partial_j u,g^{jk}\partial_k u\rangle_{L^2(\Sigma_1)} + \langle u,(K-M^2)u\rangle_{L^2(\Sigma_1)}\\
-Ca E \|u\|^2_{L^2(\Sigma_1)}+\dfrac{\beta^2}{5}\|(1-\Pi_1)u\|^2_{L^2(\Sigma_1)}\,,
\quad \cD(c'_\beta)=H^1(S_1)\otimes L^2(-a,a)\,,
\end{multline*}
 % -------------- %
which means that the self-adjoint operators $C'_\beta$ associated with $c'_\beta$ can be written as
 % -------------- %
\[
C'_\beta=L^N_S\otimes 1+  1\otimes \dfrac{\beta^2}{5}(1-P_1)\,,
\;\;
L^N_S=(1-Ca)(-\Delta^N_S)+K-M^2-Ca E\,,
\]
 % -------------- %
the operator $-\Delta^N_S$ being the Neumann Laplace-Beltrami operator in $L^2(S_1)$ associated with the form $H^1(S_1)\ni u\mapsto \langle \partial_j u,g^{jk}\partial_k u\rangle_{L^2(S_1)}$. We note that the operators
$L^N_S$ have compact resolvents and they are uniformly semibounded, $L^N_S\ge -E$ for all sufficiently large $\beta$, and their $j$th eigenvalues behave, as $j$ is fixed,
as $\Lambda_j \big(L^N_S\big) = \Lambda_j(-\Delta^N_S+K-M^2) +\cO(a)$. On the other hand, the spectrum of the transverse part $1-P_1$ consists of a simple eigenvalue zero and the infinitely degenerate eigenvalue $1$, which gives
 % -------------- %
\[
\inf\sigma_\mathrm{ess} (C'_\beta)\ge
\dfrac{\beta^2}{5}+\inf\sigma (L^N_S)
\ge \dfrac{\beta^2}{5}-E \to +\infty
\;\;\text{ as } \beta\to +\infty\,.
\]
 % -------------- %
As $C'_\beta\le C_\beta$, this result means at the same time that
 % -------------- %
\begin{equation}
     \label{eq-cess}
\inf\sigma_\mathrm{ess} (C_\beta) \to +\infty
\;\;\text{ for } \beta\to +\infty.
\end{equation}
 % -------------- %
In view of \eqref{eq-bc} and the min-max principle, we then have, for each fixed $j$,
 % -------------- %
\begin{equation}
  \label{eq-bc1}
\Lambda_j(B^N_1)\ge \Lambda_1(T^N)+\Lambda_j(C_\beta)
=-\dfrac{1}{4}\,\beta^2+\Lambda_j(C_\beta)+\cO\Big(\dfrac{1}{\beta}\Big)\,.
\end{equation}
 % -------------- %

\subsection{Passing to a common Hilbert space}

In order to deal with a family of forms acting on a fixed Hilbert space we denote $\cG:=L^2(S_1)\otimes \ell^2(\NN)$ and introduce unitary operators $\theta$, $\Theta$, and orthogonal projectors $\kappa_1$, $K_1$ as follows:
 % -------------- %
\begin{align*}
\theta&:L^2(-a,a)\to \ell^2(\NN), & (\theta f)(j)&:=\langle \varphi_j,f\rangle, \ j\in\NN\,,\\
\Theta&:L^2(S_1)\otimes L^2(-a,a)\to \cG, &  \Theta&:=1\otimes\theta,\\
\kappa_1&: \ell^2(\NN)\to \CC e_1, & \kappa_1 f&:=f(1) e_1, \\
K_1&: L^2(S_1)\otimes\ell^2(\NN) \to L^2(S_1)\otimes\CC e_1, & K_1&:=1\otimes\kappa_1
\end{align*}
 % -------------- %
where $e_1=(1,0,0,\dots)\in\ell^2(\NN)$. Finally, we introduce the natural identification operator
 % -------------- %
\[
I:L^2(S_1)\otimes \CC e_1 \mapsto L^2(S_1)\,.
\]
 % -------------- %
One easily checks that the operators $\widehat C_\beta:=\Theta C_\beta \Theta^*$ are those associated with the quadratic forms
 % -------------- %
\begin{multline}
      \label{eq-chat}
\widehat c_\beta(u,u)=(1-Ca)\langle \partial_j u,g^{jk}\partial_k u\rangle_\cG + \langle u,(K-M^2)u\rangle_\cG-CaE\|u\|^2_\cG\\
+\dfrac{\beta^2}{5}\Big(
\|(1-K_1)u\|^2_{\cG}+ \|IK_1 u\|^2_{L^2(\Omega)}\Big)\,, \quad
\cD(\widehat c_\beta)=H^1(S_1)\otimes \ell^2(\NN)\,.
\end{multline}
 % -------------- %
In view of the unitary equivalence between $C_\beta$ and $\widehat C_\beta$, Eqs.~\eqref{eq-cess} and~\eqref{eq-bc1} imply
 % -------------- %
\begin{gather}
  \label{eq-bc2}
\Lambda_j(B^N_1)\ge -\dfrac{1}{4}\,\beta^2+\Lambda_j(\widehat C_\beta)+\cO\Big(\dfrac{1}{\beta}\Big)\,,\\[.3em]
  \label{eq-bc3}
\inf\sigma_\mathrm{ess} (\widehat C_\beta) \to +\infty \;\;\text{ as } \beta\to +\infty\,.
\end{gather}
 % -------------- %

\subsection{Convergence of forms}

Recall that we have defined $E:=\|K-M^2\|_{L^\infty(S_1)}+2$. One can pick a $\beta_0$ sufficiently large so  $\widehat c_\beta\ge -(E-1)$ holds for $\beta\ge\beta_0$ and $\widehat c_{\beta_2}\ge \widehat c_{\beta_1}$ for $\beta_2\ge \beta_1\ge \beta_0$, which implies by \cite[Theorem~VI.2.21]{kato} that
 % -------------- %
\begin{equation}
      \label{eq-ccc}
\big(\widehat C_{\beta_2}+ E\big)^{-1}\le \big(\widehat C_{\beta_1}+ E\big)^{-1}
\le 1
\;\;\text{ as } \beta_2\ge\beta_1\ge \beta_0\,.
\end{equation}
 % -------------- %
Consider the quadratic form
 % -------------- %
\[
q(u,u)=\sup_{\beta\ge\beta_0} \widehat c_\beta(u,u)\,, \quad
\cD(q):=\bigg\{u\in \bigcap_{\beta\ge \beta_0} \cD(\widehat c_\beta):
\, \sup_{\beta\ge\beta_0} \widehat c_\beta(u,u)<+\infty
\bigg\}\,.
\]
 % -------------- %
It is known, see \cite[Theorem~VIII.3.13a]{kato}, that $q$ is closed, and hence it defines a self-adjoint operator $Q\ge-(E-1)$ acting in the Hilbert space $\cL:=\overline{\cD(q)}$, the closure being taken in the topology of $\cG$, and if $\tau:\cG\to\cL$ denotes the orthogonal projection, then
 % -------------- %
\begin{equation}
   \label{eq-qc2}
(\widehat C_\beta+ E)^{-1}\to \tau^* (Q+E)^{-1}\tau \;\;\text{ strongly as }
\beta\to+\infty\,,
\end{equation}
 % -------------- %
cf.~\cite[Satz 3.1]{W}. Furthermore, by~\cite[Satz 2.2]{W} we have also
 % -------------- %
\begin{equation}
   \label{eq-qc1}
\tau^* (Q+E)^{-1}\tau \le (\widehat C_\beta+ E)^{-1} \;\;\text{ for all }\beta\ge\beta_0\,.
\end{equation}
 % -------------- %
In view of the explicit expression for $\cD(\widehat c_\beta)$ we see that $u\in \cD(q)$ if and only if $u\in H^1(S_1)\otimes \ell^2(\NN)$ such that
$(1-K_1)u=0$ and $\|IK_1u\|_{L^2(\Omega)}=0$. The first condition says that $u$ is of the form
$u=f\otimes e_1$ with $f\in H^1(S_1)$ and $e_1=(1,0,0,\dots)\in\ell^2(\NN)$, and then the second condition tells us that the function $f$ must verify $\|f\|^2_{L^2(\Omega)}=0$. As we supposed that the boundary $\partial S$ is Lipschitz, we have
 % -------------- %
\[
\Big(f\in H^1(S_1) \text{ and } \|f\|^2_{L^2(\Omega)}=0\Big)
\text{ iff } f\in \widetilde H^1_0(S_1) :=\big\{
f\in H^1(S_1): f|_S\in H^1_0(S), \quad f|_{\Omega}=0
\big\}\,.
\]
 % -------------- %
Using the unitary operator $J:\widetilde H^1_0(S_1)\to H^1_0(S)$, $Jf=f|_S$, we may write
 % -------------- %
\begin{gather*}
\cD(q)= \big\{f\otimes e_1: f\in \widetilde H^1_0(S_1)\big\} =
\big\{J^*f\otimes e_1: f\in H^1_0(S)\big\}
\equiv K_1^*I^*J^* H^1_0(S)\,.
\end{gather*}
 % -------------- %
If $\cD(q)\ni u=J^*f\otimes e_1$ with $f\in H^1_0(S)$, then $f= J I K_1 u$. Substituting into~\eqref{eq-chat} we observe that the last summand equals zero and that the factor $Ca=C \xi\beta^{-1}\log\beta$ vanishes in the limit $\beta\to+\infty$, so we arrive at
 % -------------- %
\begin{align*}
q(u,u)&= \langle \partial_j u,g^{jk}\partial_k u\rangle_{L^2(\Sigma_1)} + \langle u,(K-M^2)u\rangle_{L^2(\Sigma_1)}
 \\
&=\langle \partial_j u,g^{jk}\partial_k u\rangle_{L^2(\Sigma)} + \langle u,(K-M^2)u\rangle_{L^2(\Sigma)} \\
&=\langle \partial_j f,g^{jk}\partial_k f\rangle_{L^2(S)} + \langle f,(K-M^2)f\rangle_{L^2(S)}.
\end{align*}
 % -------------- %
Comparing this to~\eqref{eq-lsd} we conclude that $Q=K_1^*I^*J^* L_S^D J I K_1$.

\subsection{Convergence of eigenvalues}
\label{s: convergence}
In view of Eqs. \eqref{eq-ccc}, \eqref{eq-qc2}, and \eqref{eq-qc1}, the operators $F_\beta:=E-(\widehat C_\beta+E)^{-1}$ form a monotonically increasing family converging strongly to the operator $G:=E-K_1^*I^*J^* (L_S^D+E)^{-1} J I K_1$ as $\beta\to+\infty$. Furthermore, for any $\varepsilon>0$,
all these operators have by~\eqref{eq-bc3} no essential spectrum in $(-\infty,E-\varepsilon)$ if $\beta$ is sufficiently large,
and by \cite{W2}, we have for any fixed $j$ with $\Lambda_j(G)<E-\varepsilon$ the convergence
 % -------------- %
\begin{equation}
     \label{eq-conv}
\Lambda_j(F_\beta)\to \Lambda_j(G) \;\;\text{ as } \beta\to+\infty\,.
\end{equation}
 % -------------- %
On the other hand, for any  fixed $j$ we can find $\varepsilon>0$ with
$\Lambda_j(Q)=\Lambda_j(L^D_S)=\mu^D_j < \varepsilon^{-1}-E$, and
 % -------------- %
\begin{equation}
    \label{eq-gf}
\Lambda_j(G)=E-( \mu_j^D+E)^{-1}<E-\varepsilon\,, \quad
\Lambda_j(F_\beta)=E-\big(\Lambda_j(\widehat C_\beta)+E)^{-1}\,,
\end{equation}
 % -------------- %
which means that the convergence~\eqref{eq-conv} holds for any fixed $j$. Substituting \eqref{eq-gf} into~\eqref{eq-conv} we deduce $\Lambda_j(\widehat C_\beta)=\mu_j^D+o(1)$ for $\beta\to+\infty$, and \eqref{eq-bc2} gives then $\Lambda_j(B^N_1)\ge -\beta^2/4+ \mu_j^D+o(1)$.
In combination with \eqref{eq-lll} and Lemma~\ref{lem1}, this concludes the proof of Proposition \ref{thm1}.

%%%%%%%%%%%%%%%%%%%%%%%%%%%%%%%%%%%%%%

\section{Eigenvalue asymptotics with an improved remainder}

\subsection{Scheme of the proof}
\label{s: kernel}

Let us turn to the improved remainder estimate in Theorem \ref{thmain}. So far we have proved
$$
-\frac{\beta^2}{4} + \mu^D_j + o(1) \leq E_j(\beta) \leq -\frac{\beta^2}{4} + \mu^D_j +  \cO(\beta^{-1}\log \beta)
$$
 % -------------- %
holds as $\beta\to +\infty$. Now we are going to improve the lower bound to the same order of error term, $\cO(\beta^{-1}\log \beta)$, as the upper one
provided we adopt stronger assumptions on the regularity of the boundary. The estimates will closely follow the procedure used in \cite{EP} for an interaction supported by finite planar curves.
We provide first the main steps of the proof, and the technical details will be presented
in separate subsections below.

Our first aim is to estimate the decay of the eigenfunctions of $H$ with respect to the distance from $S$.
For this purpose, we will use an integral representation of the eigenfunction, which was obtained
in Corollary 2.3 of \cite{BEKS}: if $\lambda<0$ and $u\in \ker(H-\lambda)$, then
one can represent
\begin{align}\label{eq_gammafield}
u(x) = \iint_S \frac{ e^{-\sqrt{|\lambda|}\cdot|x-s|} }{4\pi|x-s|}\, h(s) \;d\sigma(s),
\end{align}
where $h\in L^2(S)$ is a solution to the integral equation
\begin{equation}\label{eq-hbb}
h(t) = \beta \iint_S \frac{ e^{-\sqrt{|\lambda|}\cdot|t-s|} }{4\pi|t-s|}\, h(s) \;d\sigma(s).
\end{equation}
In the rest of the section we are going to establish some relations between the above functions $u$ and $h$.
We remark that if $u$ and $h$ are related by \eqref{eq_gammafield}, then,
for almost every $x\in\R^3$, there holds
 % -------------- %
\begin{equation}
     \label{eq-udu}
     \begin{aligned}
\big|u(x)\big| &\leq  \frac{ e^{-\sqrt{|\lambda|}\dist(x,S)} }{4\pi\dist(x,S)}  \|h\|_{L^1(S)},\\
\big|\nabla u(x)\big| &\leq  \frac{1}{4\pi}\left( \sqrt{|\lambda|}+\frac{1}{\dist(x,S)}\right) \frac{ e^{-\sqrt{|\lambda|}\dist(x,S)} }{\dist(x,S)} \|h\|_{L^1(S)}
\end{aligned}
\end{equation}
 % -------------- %
where $\dist$ is the usual distance in $\RR^3$.
Therefore, in order to estimate the decay of the eigenfunctions $u$ of $H$, it is sufficient to
have a suitable bound for the norm of the associated functions $h$.

For sufficiently small $\delta$ the map $F$ given by \eqref{eq-sff} is a diffeomorphism between $S\times(-\delta,\delta)$ and
\begin{equation}
     \label{eq-box}
\boxminus_\delta := \big\{
s+t \nu(s): \, s\in S, \, t\in(-\delta,\delta)
\big\} .
\end{equation}
% where $F$ is given by \eqref{eq-sff}, then the map $S\times(-\delta,\delta)\mapsto s+t\nu(s)\in \boxminus$ is a diffeomorphism.
Hence, for a given function $u$ defined on $S$ we define the function $u_0$
on $\boxminus_\delta$ via
 % -------------- %
$$
u_0\big(s + t \nu(s)\big) := u( s).
$$
 % -------------- %
The following assertion is of crucial importance for the subsequent analysis, and it
is the most technically demanding part of this paper. Its proof by direct calculations is given in Subsection~\ref{proofuu0} below.
\begin{lemma}\label{u-u0}
There exist positive constants $C_1$, $C_2$ and $\delta_0$ with the following properties:
if  $\delta\in (0,\delta_0)$ and $\lambda< -C_1 (\log \delta)^2$ and $u$ and $h$ are related by
\eqref{eq_gammafield}, then
$\|u_0-u\|_{L^2(\boxminus_\delta)}
\leq C_2 \|h\|_{L^2(S)}  \cdot \delta$.
\end{lemma}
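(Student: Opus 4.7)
The plan is to view $u_0-u$ as the image of $h$ under the integral operator $T\colon L^2(S)\to L^2(\boxminus_\delta)$ with kernel
\[
\kappa(x,s'):=G_\lambda(p(x),s')-G_\lambda(x,s'),\qquad G_\lambda(y,s'):=\frac{e^{-\sqrt{|\lambda|}\,|y-s'|}}{4\pi|y-s'|},
\]
where $p\colon\boxminus_\delta\to S$ denotes the tubular projection $p(s+t\nu(s))=s$, and to prove the operator-norm bound $\|T\|_{L^2\to L^2}\le C_2\delta$ via the Schur test. Once that is established, $\|u_0-u\|_{L^2(\boxminus_\delta)}=\|Th\|_{L^2}\le C_2\delta\,\|h\|_{L^2(S)}$ gives the lemma.

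The fundamental pointwise reduction combines the fundamental theorem of calculus along the normal with the classical Yukawa gradient bound: for $x=s+t\nu(s)$,
\[
|\kappa(x,s')|\le\int_0^{|t|}\bigl|\nabla_y G_\lambda(s+r\nu(s),s')\bigr|\,dr,\qquad
\bigl|\nabla_y G_\lambda(y,s')\bigr|\le \frac{C}{|y-s'|}\Bigl(\sqrt{|\lambda|}+\frac{1}{|y-s'|}\Bigr)e^{-\sqrt{|\lambda|}\,|y-s'|}.
\]
Since $S_2$ is $C^4$ and $S$ is relatively compact in $S_2$, for $\delta_0$ small the Jacobian of the map $F$ of~\eqref{eq-sff} is uniformly bounded on $\boxminus_{\delta_0}$ and one has the uniform comparison $|s+r\nu(s)-s'|^2\ge c(\dist_S(s,s')^2+r^2)$, both proved by a Taylor expansion of $F$ in tangential-polar variables centered at $s$ or at $s'$.

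Apply Schur's test, $\|T\|^2\le M_1M_2$ with $M_1:=\sup_x\iint_S|\kappa(x,s')|\,d\sigma(s')$ and $M_2:=\sup_{s'}\int_{\boxminus_\delta}|\kappa(x,s')|\,dx$. For either choice, the tangential-polar integration on $S$ together with the substitution $u=\sqrt{\rho^2+r^2}$ collapses the inner integral to
\[
\iint_S\bigl|\nabla_y G_\lambda(s+r\nu(s),s')\bigr|\,d\sigma(s')\le C\int_{cr}^{\infty}\Bigl(\sqrt{|\lambda|}+\frac1u\Bigr)e^{-\sqrt{|\lambda|}u}\,du\le C\bigl(1+|\log(\sqrt{|\lambda|}r)|\bigr),
\]
the logarithm originating from the exponential integral $E_1$. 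Integrating $r\in(0,|t|)$ with $|t|\le\delta$ yields $M_1\le C\delta(1+|\log(\sqrt{|\lambda|}\delta)|)$, and a parallel Fubini argument swapping the integrations in $t\in(-\delta,\delta)$ and $r\in(0,|t|)$ contributes an additional factor $\delta$, giving $M_2\le C\delta^2(1+|\log(\sqrt{|\lambda|}\delta)|)$.

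The hypothesis $\lambda<-C_1(\log\delta)^2$ enters only at the final step, to ensure $\sqrt{|\lambda|}\ge\sqrt{C_1}\,|\log\delta|$ and hence $|\log(\sqrt{|\lambda|}\delta)|\le 2|\log\delta|$ for $\delta$ small; this yields
\[
\|T\|\le\sqrt{M_1M_2}\le C\delta^{3/2}|\log\delta|\le C_2\delta
\]
whenever $\delta_0$ is chosen so that $\delta_0^{1/2}|\log\delta_0|\le 1$. The main obstacle I foresee is controlling the logarithmic on-diagonal singularity of $\nabla_y G_\lambda$, a genuinely three-dimensional feature absent in the planar case~\cite{EP}: the role of the spectral hypothesis is precisely to let the Yukawa exponential $e^{-\sqrt{|\lambda|}|\cdot|}$ absorb that $|\log|$ divergence while still leaving a clean $\delta$-gain.
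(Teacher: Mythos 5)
Your argument is correct in substance, but it takes a genuinely different route from the paper. The paper writes $u_0-u$ as the integral over the normal variable of $\langle\nabla u,\nu(x_0)\rangle$ and exploits a cancellation specific to the normal direction: differentiating the Yukawa kernel along $\nu(x_0)$ produces the factor $\langle x-s',\nu(x_0)\rangle=t+\omega(z,y)|z-y|^2$ (via the Taylor formula \eqref{eq-RR}), which tames the $|x-s'|^{-2}$ singularity; the resulting fixed-height kernels $F_t,G_t$ are then handled by Young's inequality in chart coordinates, and the case of several charts is treated separately, the hypothesis $\lambda<-C_1(\log\delta)^2$ being used precisely there to kill the cross terms through the factor $e^{-\sqrt{-\lambda}\rho}$. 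You instead regard $u_0-u$ as an integral operator applied to $h$ and run a Schur test on the difference kernel, estimated by the modulus of the full gradient of the Green function; the point, which you identify correctly, is that integrating $|x-s'|^{-2}e^{-\sqrt{|\lambda|}|x-s'|}$ over the two-dimensional surface leaves only a logarithm (the exponential integral), while the fundamental-theorem-of-calculus variable supplies one factor $\delta$ and the normal integration in $M_2$ a second one, giving $\|T\|\lesssim\delta^{3/2}\cdot\log$. This is arguably more economical: no cancellation, no convolution/Young step, no separate multi-chart argument, and the spectral hypothesis is used only to absorb a logarithm, so your scheme would in fact work under the weaker assumption $\lambda\le-c_0<0$.

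Two points should be tightened. First, the uniform comparison $|s+r\nu(s)-s'|^2\ge c\,(\dist_S(s,s')^2+r^2)$ does not follow from a Taylor expansion alone: the expansion gives it only for $s'$ near $s$, whereas for pairs that are far apart on $S$ but not necessarily far in $\RR^3$ you must invoke the injectivity of the map \eqref{eq-sff} on $S_1\times(-a,a)$ together with the compactness of $\overline S$ (this is exactly the configuration the paper handles by its decomposition $h=\sum_\alpha h_\alpha$, and where it actually needs $\lambda<-C_1(\log\delta)^2$). Second, the final absorption step as written, $|\log(\sqrt{|\lambda|}\,\delta)|\le 2|\log\delta|$, fails when $|\lambda|$ is very large; the harmless fix is to carry the one-sided factor $\log_+\big(1/(\sqrt{|\lambda|}\,\delta)\big)$ instead, since for $\sqrt{|\lambda|}\,\delta\ge1$ the exponential integral is bounded, and under your hypothesis this factor is at most $|\log\delta|$, so the conclusion $\|T\|\le C_2\delta$ goes through.
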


Assume now that $u$ is a normalized eigenfunction of $H$, then
Lemma \ref{u-u0} is used to obtain an upper bound for the norm of $h$, which gives then a pointwise upper bound
for $u$ using \eqref{eq-udu}.

 % -------------- %
\begin{lemma} \label{h_estim}
Let $j\in \NN$ be fixed and $u$ be a normalized eigenfunction of $H$
for the eigenvalue $E_j(\beta)$. Let $h$ be associated with $u$ by \eqref{eq_gammafield}, then
$\|h\|_{L^2(S)} =\cO(\beta^2)$ for large $\beta$.
\end{lemma}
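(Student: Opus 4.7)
The plan is to exploit the integral identity relating $u$ and $h$. Evaluating \eqref{eq_gammafield} at $t\in S$ and comparing with \eqref{eq-hbb} shows that the trace of $u$ on $S$ equals $h/\beta$; in particular $\|u\|_{L^{2}(S)}=\|h\|_{L^{2}(S)}/\beta$, and so it is enough to bound the trace norm $\|u\|_{L^{2}(S)}$ by $\cO(\beta)$.

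To obtain such a trace bound, I combine Lemma \ref{u-u0} with the normalization $\|u\|_{L^{2}(\RR^{3})}=1$. Passing to tubular coordinates $(s,t)\in S\times(-\delta,\delta)$ on $\boxminus_{\delta}$ via \eqref{eq-sff}, the Euclidean volume element becomes $(1+k_{1}(s)t)(1+k_{2}(s)t)\,d\sigma\,dt$, and for $\delta$ small enough (depending only on $S$) this Jacobian is bounded below by $1/2$. Since $u_{0}$ is constant along the normal fibres, a direct change-of-variables computation gives $\|u_{0}\|_{L^{2}(\boxminus_{\delta})}^{2}\ge \delta\,\|u\|_{L^{2}(S)}^{2}$. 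Combining this with the triangle inequality, the trivial bound $\|u\|_{L^{2}(\boxminus_{\delta})}\le \|u\|_{L^{2}(\RR^{3})}=1$, and Lemma \ref{u-u0} yields
\[
\sqrt{\delta}\,\|u\|_{L^{2}(S)}\;\le\;\|u_{0}\|_{L^{2}(\boxminus_{\delta})}\;\le\;1+C_{2}\,\delta\,\|h\|_{L^{2}(S)}.
\]
Substituting $\|u\|_{L^{2}(S)}=\|h\|_{L^{2}(S)}/\beta$ and rearranging produces
\[
\|h\|_{L^{2}(S)}\Bigl(\dfrac{\sqrt{\delta}}{\beta}-C_{2}\,\delta\Bigr)\;\le\;1.
\]

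The remaining step is to balance $\delta$ so that the coefficient on the left is strictly positive and of order $\beta^{-2}$. Taking $\delta:=1/(4C_{2}^{2}\beta^{2})$ makes $\sqrt{\delta}/\beta-C_{2}\delta$ equal to a positive constant times $\beta^{-2}$, and the inequality immediately yields $\|h\|_{L^{2}(S)}=\cO(\beta^{2})$. The hypotheses of Lemma \ref{u-u0} are easily verified for large $\beta$: clearly $\delta<\delta_{0}$, and by Proposition \ref{thm1} we have $\lambda=E_{j}(\beta)\sim -\beta^{2}/4$, which makes the condition $\lambda<-C_{1}(\log\delta)^{2}\sim -4C_{1}(\log\beta)^{2}$ automatic. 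The one subtle point is that the forced scaling $\delta\sim\beta^{-2}$ costs a factor of $\beta^{1/2}$ relative to the heuristically sharp bound $\cO(\beta^{3/2})$; however, the coarser $\cO(\beta^{2})$ is entirely sufficient for the subsequent applications of the pointwise decay estimates \eqref{eq-udu}.
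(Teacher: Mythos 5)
Your argument is correct and is essentially the paper's own proof: the same chain $1\ge\|u\|_{L^2(\boxminus_\delta)}\ge\|u_0\|_{L^2(\boxminus_\delta)}-\|u-u_0\|_{L^2(\boxminus_\delta)}$, the same lower bound $\|u_0\|_{L^2(\boxminus_\delta)}^2\ge\delta\|u\|_{L^2(S)}^2$ with $h=\beta u|_S$, Lemma \ref{u-u0} for the error term, and the same optimal choice $\delta\sim(C_2\beta)^{-2}$ yielding $\|h\|_{L^2(S)}\le 4C_2\beta^2$. The only cosmetic difference is that you invoke Proposition \ref{thm1} to verify $\lambda<-C_1(\log\delta)^2$ where the paper cites Lemma \ref{lem1}; both are available at this point and suffice.
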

 % -------------- %
\begin{proof} We have
 % -------------- %
\begin{equation}
      \label{equvu}
1=\|u\|_{L^2(\RR^3)} \geq \|u\|_{L^2(\boxminus_\delta)} \geq \|u_0\|_{L^2(\boxminus_\delta)}
-\|u-u_0\|_{L^2(\boxminus_\delta)}.
\end{equation}
Using the equality $h=\beta u|_S$, see \eqref{eq_gammafield} and \eqref{eq-hbb}, we get for $\delta$ small enough
 % -------------- %
\[
\|u_0\|_{L^2(\boxminus_\delta)}^2 \geq
\dfrac{1}{2} \int_{-\delta}^{\delta} \iint_S\big|u(s)\big|^2 d\sigma(s) \,dt = \delta \|u\|^2_{L^2(S)}= \delta \beta^{-2} \|h\|_{L^2(S)}^2.
\]
Now we take $\delta:=(A\beta)^{-2}$ with $A>0$.
By Lemmata \ref{lem1} and \ref{u-u0}, for sufficiently large $\beta$ we have $\|u_0-u\|_{L^2(\boxminus_\delta)}\le C_2 \|h\|_{L^2(S)}\delta$.
Therefore equation~\eqref{equvu} reads as
\[
1 \geq \Big(\beta^{-1}\sqrt{\delta} - C_2 \delta\Big) \|h\|_{L^2(S)}=\dfrac{1}{A\beta^2}\Big(
1-\dfrac{C_2}{A}\Big)\|h\|_{L^2(S)}.
\]
and the choice $A=2C_2$ gives $\|h\|_{L^2(S)}\le 4C_2 \beta^2$.
\end{proof}

The result of Lemma~\ref{h_estim} will be now used to obtain a new two-side estimate for the eigenvalues.
First we define one more auxiliary operator as follows. For $\varepsilon>0$ consider the set
 % -------------- %
\begin{equation}
\label{Theta_d}
\Theta_\varepsilon = \{z\in\R^3: \, \dist(z,S)<\varepsilon\}
\end{equation}
 % -------------- %
and the self-adjoint operator $K_\varepsilon$ acting in $L^2(\Theta_\varepsilon)$ generated by the quadratic form
 % -------------- %
\begin{equation}
\label{K_form}
k_\varepsilon(u,u)=\iiint_{\Theta_\varepsilon} |\nabla u|^2\, dx - \beta \iint_S|u|^2\, d\sigma
\,,\quad \cD(k_\varepsilon)=H_0^1(\Theta_\varepsilon) \,.
\end{equation}
 % -------------- %
Since the eigenfunctions of $H$ decay fast with the distance from $S$, one conjectures that the eigenvalues
of $H$ are close to those of $K_\varepsilon$ in a suitable asymptotic regime.
Our aim is now to put this guess on a more solid ground. Let us start with a technical preliminary.
In what follows we denote
\begin{equation}
   \label{eq-dk}
d:=\frac{k\log\beta}{\beta}, \quad
\delta:= d+\frac{1}{\beta} = \frac{1+k\log\beta}{\beta},
\end{equation}
where $k>1$ is a constant that will be chosen later. Moreover, let $\gamma\in C^\infty(\R)$ be such that $\gamma(s)=1$ for $s\geq 1$, $\gamma(s)\in (0,1)$ for $s\in (0,1)$, and $\gamma(s)=0$ for $s\leq 0$. For large $\beta$  set
\begin{align*}
g_\beta(x) :=
\begin{cases}
\gamma\left( \dfrac{  \log\big(\dist\big(x,\partial \Theta_\delta \big)\big)  + \log\beta }{ \log(\log\beta) } + 1  \right)  &\text{for } x\in \Theta_\delta\,,\\[.3em]
0	&\text{for } x\notin \Theta_\delta\,.
\end{cases}
\end{align*}
 % -------------- %
The function $g_\beta$ is absolutely continuous and its gradient $\nabla g_\beta(x) = \big(\partial_1 g_\beta(x), \partial_2 g_\beta(x), \partial_3 g_\beta(x)\big)$ exists for almost every $x$.

 The following result is obvious:

 % -------------- %
\begin{lemma}\label{lem:SuppNabla}
For large $\beta$, the support of $\nabla g_\beta$ is contained in the set
 % -------------- %
$$
\Omega(\beta) :=
\left\{ x\in \Theta_\delta : \frac{1}{\beta\log\beta} \leq \dist\big(x,\partial \Theta_\delta \big)\leq \frac{1}{\beta}  \right\},
$$
and $\Omega(\beta) \cap \Theta_d = \emptyset$.
\end{lemma}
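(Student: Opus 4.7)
The plan is to argue directly from the chain rule and from the structure of the function $\gamma$. Since $\gamma$ is smooth, equals $1$ on $[1,\infty)$ and $0$ on $(-\infty,0]$, its derivative $\gamma'$ is supported in $[0,1]$. First I would note that on the open set $\Theta_\delta$ the map $x\mapsto \dist(x,\partial\Theta_\delta)$ is Lipschitz and strictly positive, hence the composition defining $g_\beta$ is differentiable almost everywhere, and $\nabla g_\beta(x)\neq 0$ forces $\gamma'$ to be evaluated on a point of $(0,1)$, i.e.\
\[
0 \;<\; \frac{\log\bigl(\dist(x,\partial\Theta_\delta)\bigr)+\log\beta}{\log(\log\beta)}+1 \;<\; 1 .
\]
Outside $\Theta_\delta$ the function $g_\beta$ vanishes identically, so that region contributes nothing to the support.

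Next I would solve the above inequality. Multiplying through by $\log(\log\beta)>0$ (for $\beta$ large) and exponentiating, the condition is equivalent to
\[
\frac{1}{\beta\log\beta} \;<\; \dist(x,\partial\Theta_\delta) \;<\; \frac{1}{\beta}\,,
\]
which is precisely the interior of $\Omega(\beta)$. Taking closures shows that $\supp \nabla g_\beta \subset \Omega(\beta)$.

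For the second assertion, I would use the definitions \eqref{eq-dk} giving $\delta-d=1/\beta$, together with the triangle inequality: if $x\in\Theta_d$ and $y\in\partial\Theta_\delta$, then
\[
|x-y| \;\geq\; \dist(y,S)-\dist(x,S) \;>\; \delta-d \;=\; \frac{1}{\beta},
\]
so $\dist(x,\partial\Theta_\delta)>1/\beta$ for every $x\in\Theta_d$. Since the definition of $\Omega(\beta)$ imposes $\dist(x,\partial\Theta_\delta)\le 1/\beta$, the two sets are disjoint. I do not anticipate a real obstacle here; the only mild care needed is to justify that for small $\delta$ the boundary $\partial\Theta_\delta$ is indeed the level set $\{\dist(\cdot,S)=\delta\}$, which follows from the tubular neighborhood structure of $S$ already used in \eqref{eq-sff}--\eqref{eq-box}.
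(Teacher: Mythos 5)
Your proposal is correct: the paper states this lemma without proof (calling it obvious), and your argument — chain rule together with $\supp\gamma'\subset[0,1]$ to solve $0<\frac{\log\dist(x,\partial\Theta_\delta)+\log\beta}{\log(\log\beta)}+1<1$, plus the $1$-Lipschitz property of $\dist(\cdot,S)$ and $\dist(y,S)=\delta$ for $y\in\partial\Theta_\delta$ to get $\dist(x,\partial\Theta_\delta)\ge\delta-\dist(x,S)>1/\beta$ on $\Theta_d$ — is exactly the intended verification. The only immaterial imprecision is calling the open strip ``precisely the interior of $\Omega(\beta)$''; all that is needed (and what your argument gives) is that the strip is open, contained in $\Omega(\beta)$, and that $\Omega(\beta)$ is closed.
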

 % -------------- %
Next let us fix an arbitrary $N\ge 1$ and consider an orthonormal family $(u_{j,\beta})_{j=1,\dots,N}$ of eigenfunctions
of $H$ for the eigenvalues $E_j(\beta)$, $j=1,\dots, N$. Introduce their cut-offs
\[
\phi_{j,\beta}:= g_\beta u_{j,\beta}.
\]
Remark that $g_\beta \in H_0^1(\Theta_\delta)$ and that  $\,g_\beta$  and $\,\nabla g_\beta$ are bounded by (\ref{dgbound}) below and Lemma \ref{lem:SuppNabla},
which gives the inclusions $\phi_{j,\beta} \in H^1_0(\Theta_\delta)$.

The proofs of the following two Lemmata are given in Subsections \ref{prooflem16} and \ref{lemonb} respectively.

\begin{lemma}\label{lem:Lemma16}
For any $N\in \NN$ and $\mu>0$ there exists $\kappa_1>0$ such that
for any $k\ge \kappa_1$ in \eqref{eq-dk}
and any $j,l=1,\dots,N$ there holds
$\left|  k_\delta(\phi_{j,\beta},\phi_{l,\beta})  - E_j(\beta)  \delta_{j,l}  \right| \leq \beta^{-\mu}$
as $\beta$ is large.
\end{lemma}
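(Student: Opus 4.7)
The plan is to derive an IMS-type localization identity expressing $k_\delta(\phi_{j,\beta},\phi_{l,\beta})$ in terms of the original eigenfunctions $u_{j,\beta}$, and then quantify the truncation error using the exponential decay of $u_{j,\beta}$ away from $S$. Writing $u_j:=u_{j,\beta}$, $g:=g_\beta$, $\phi_j:=\phi_{j,\beta}$ for brevity, one checks directly from the definition of $g_\beta$ and the choice $\delta-d=1/\beta$ that $g\equiv 1$ on $\Theta_d$; in particular $g|_S=1$, so the surface term of $k_\delta(\phi_j,\phi_l)$ simplifies to $-\beta\iint_S u_j\overline{u_l}\,d\sigma$.

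\textbf{Localization identity.} Test the weak form $h(u_j,v)=E_j(\beta)\langle u_j,v\rangle_{L^2(\RR^3)}$ of the eigenvalue equation with $v:=g^2 u_l$ (extended by zero; this lies in $H^1(\RR^3)$ because $g\in H^1_0(\Theta_\delta)\cap W^{1,\infty}$). Expanding $\nabla(g^2 u_l)=g^2\nabla u_l+2gu_l\nabla g$ and using $g|_S=1$, one obtains an expression for $\iiint g^2\nabla u_j\cdot\overline{\nabla u_l}\,dx$ involving $E_j(\beta)$ and the mixed term $-2\iiint g\,\overline{u_l}\,\nabla g\cdot\nabla u_j\,dx$. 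Repeating the procedure with the roles of $j$ and $l$ exchanged and complex-conjugating yields the analogous identity with $E_l(\beta)$ and the symmetric mixed term $-2\iiint g\,u_j\,\nabla g\cdot\overline{\nabla u_l}\,dx$. Averaging the two relations and substituting into the standard expansion of $\iiint\nabla(gu_j)\cdot\overline{\nabla(gu_l)}\,dx$, the three mixed-derivative contributions cancel exactly, and the surface terms cancel against the $-\beta\iint_S u_j\overline{u_l}$ already present in $k_\delta$, producing
\[
k_\delta(\phi_j,\phi_l)=\frac{E_j(\beta)+E_l(\beta)}{2}\iiint g^2 u_j\overline{u_l}\,dx+\iiint|\nabla g|^2 u_j\overline{u_l}\,dx.
\]
Combining this with orthonormality $\iiint u_j\overline{u_l}\,dx=\delta_{jl}$, the equality $g^2=1$ on $\Theta_d$, and the inclusion $\mathrm{supp}(\nabla g)\subset\Omega(\beta)$ (Lemma~\ref{lem:SuppNabla}), Cauchy--Schwarz reduces the lemma to
\[
\big|k_\delta(\phi_j,\phi_l)-E_j(\beta)\delta_{jl}\big|\leq\tfrac{1}{2}(|E_j|+|E_l|)\|u_j\|_{L^2(\RR^3\setminus\Theta_d)}\|u_l\|_{L^2(\RR^3\setminus\Theta_d)}+\|\nabla g\|_\infty^2\|u_j\|_{L^2(\Omega(\beta))}\|u_l\|_{L^2(\Omega(\beta))}.
\]

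\textbf{Decay estimate and conclusion.} Apply \eqref{eq-udu} with $\lambda=E_j(\beta)$. Proposition~\ref{thm1} gives $|E_j(\beta)|=O(\beta^2)$ and $\sqrt{|E_j(\beta)|}=\beta/2+O(\beta^{-1})$, and Lemma~\ref{h_estim} combined with Cauchy--Schwarz yields $\|h_j\|_{L^1(S)}\leq|S|^{1/2}\|h_j\|_{L^2(S)}=O(\beta^2)$. Consequently, for $r:=\dist(x,S)\geq d$ one has the pointwise bound $|u_j(x)|\leq C\beta^2 e^{-(\beta/2-o(1))r}/r$. Integrating in tubular coordinates on a fixed neighborhood of $S$ (with a crude far-field exponential bound beyond), the key computation
\[
\int_d^{\infty}\frac{e^{-\beta r}}{r^2}\,dr\leq\frac{e^{-\beta d}}{d^2\beta}=O\!\left(\frac{\beta^{1-k}}{(\log\beta)^2}\right),
\]
together with $e^{-\beta d}=\beta^{-k}$, gives $\|u_j\|_{L^2(\RR^3\setminus\Theta_d)}^2=O(\beta^{5-k}/(\log\beta)^2)$. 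A chain-rule estimate on the defining formula for $g_\beta$ yields $\|\nabla g\|_\infty=O(\beta\log\beta)$, and plugging these into the previous displayed bound makes both error contributions of order $\beta^{7-k}$, so any $\kappa_1\geq 7+\mu$ suffices.

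\textbf{Main obstacle.} The delicate step is the symmetrization that produces the clean IMS identity \emph{without} residual terms of the form $\iiint g\,\nabla g\cdot u\,\nabla\bar u\,dx$. Without this cancellation one would have to control $\|\nabla u_j\|_{L^2(\RR^3\setminus\Theta_d)}$, which requires differentiating the representation \eqref{eq_gammafield} and re-running the decay argument for the gradient---feasible but cumbersome. A secondary bookkeeping difficulty is tracking the competition between the decay rate $e^{-\beta r/2}$ and the polynomial factors $\beta^2$ (from $\|h_j\|_{L^2}$) and $\beta\log\beta$ (from $\|\nabla g\|_\infty$); this competition is precisely what dictates the threshold $7+\mu$ for $\kappa_1$.
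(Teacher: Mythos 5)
Your proof is correct, but it follows a genuinely different route from the paper. The paper does not symmetrize: it writes $k_\delta(\phi_{j,\beta},\phi_{l,\beta})-E_j(\beta)\delta_{jl}$ as $-\langle\nabla u,\nabla v\rangle_{L^2(\RR^3)}+\langle\nabla\phi,\nabla\psi\rangle_{L^2(\Theta_\delta)}$, expands the product rule into five integrals $I_1,\dots,I_5$ (the exterior term, the $|\nabla g_\beta|^2$ term, the $(|g_\beta|^2-1)\,\overline{\nabla u}\cdot\nabla v$ term, and the two cross terms $g_\beta\nabla g_\beta\cdot\nabla u$, $g_\beta\nabla g_\beta\cdot\nabla v$), and estimates each one directly using \emph{both} pointwise bounds in \eqref{eq-udu} — including the gradient bound, which is already available there without any extra work, so the ``obstacle'' you flag (having to differentiate \eqref{eq_gammafield} and re-run the decay argument) is in fact already disposed of in the paper — together with the integrated bounds $\iiint_{\Omega(\beta)}|\nabla g_\beta|\le C_1$ and $\iiint_{\Omega(\beta)}|\nabla g_\beta|^2\le C_2\beta\log\beta$ of Lemma~\ref{lem:estim_nablag}, arriving at a bound $C e^{\alpha d}d^{-2}\beta^6\le k^{-2}\beta^{8-2L_1k}(\log\beta)^{-2}$. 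Your symmetrized IMS identity (testing the eigenvalue equation with $g_\beta^2 u_l$ and with $g_\beta^2 u_j$, averaging, and using $g_\beta=1$ on $S$, which holds since $\dist(\Theta_d,\partial\Theta_\delta)\ge\delta-d=1/\beta$) is valid and kills the cross terms and all occurrences of $\nabla u_j$ exactly, leaving only $\tfrac{E_j+E_l}{2}\iiint g_\beta^2 u_j\overline{u_l}+\iiint|\nabla g_\beta|^2u_j\overline{u_l}$; this buys a shorter computation needing only the first bound of \eqref{eq-udu}, Lemma~\ref{h_estim}, and Lemma~\ref{lem:SuppNabla}, at the price of the cruder $\|\nabla g_\beta\|_\infty^2=\cO(\beta^2(\log\beta)^2)$ in place of the paper's integrated estimates — harmless here since $k$ is free, and your power counting ($\beta^{7-k}$ versus the paper's $\beta^{8-2L_1k}$, both handled by taking $k$ large) is sound; only the precise threshold ``$\kappa_1\ge 7+\mu$'' should be read modulo constants, which the leftover logarithmic factors indeed absorb for large $\beta$.
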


\begin{lemma}\label{lem:QuasiONB}
For any $N\in \NN$ and $\mu>0$ there exists $\kappa_2>0$
such that for any $k>\kappa_2$ in \eqref{eq-dk} and any $j,l=1,\dots,N$ there holds
$\big| \langle{  \phi_{j,\beta}  ,  \phi_{l,\beta}  }\rangle_{L^2(\Theta_\delta)}  - \delta_{j,l}  \big|
\leq \beta^{-\mu}$ as $\beta$ is large.
\end{lemma}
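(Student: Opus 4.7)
\textbf{Proof strategy for Lemma \ref{lem:QuasiONB}.} The plan is to reduce the claim to a pointwise-type decay estimate for the eigenfunctions $u_{j,\beta}$ away from $S$, and then exploit the integral representation \eqref{eq_gammafield} together with the $L^2$-bound on $h$ from Lemma~\ref{h_estim}.

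First I would unfold the inner product. Since $u_{j,\beta}$ form an orthonormal family in $L^2(\RR^3)$, we can write
\[
\langle\phi_{j,\beta},\phi_{l,\beta}\rangle_{L^2(\Theta_\delta)}-\delta_{j,l}
=\iiint_{\RR^3}\bigl(g_\beta^{\,2}-1\bigr)\,u_{j,\beta}\,\overline{u_{l,\beta}}\,dx.
\]
By the construction of $g_\beta$ and Lemma~\ref{lem:SuppNabla} the function $g_\beta$ is identically $1$ on $\Theta_d$ and vanishes outside $\Theta_\delta$, so the integrand is supported in $\RR^3\setminus\Theta_d$ and is bounded pointwise by $|u_{j,\beta}\,u_{l,\beta}|$. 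Applying Cauchy--Schwarz I reduce everything to showing
\[
\|u_{j,\beta}\|_{L^2(\RR^3\setminus\Theta_d)}\le \beta^{-\mu},\qquad j=1,\dots,N,
\]
for large $\beta$, provided $k$ in \eqref{eq-dk} is chosen large enough depending on $N$ and $\mu$.

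For this decay estimate I would use the Green representation \eqref{eq_gammafield} together with Cauchy--Schwarz on the surface integral to obtain, for almost every $x$,
\[
\bigl|u_{j,\beta}(x)\bigr|^2\le \frac{\|h_j\|_{L^2(S)}^{\,2}}{(4\pi)^2}\iint_S\frac{e^{-2\sqrt{|\lambda_j|}|x-s|}}{|x-s|^2}\,d\sigma(s),
\]
where $\lambda_j=E_j(\beta)$ and $h_j$ is the density associated with $u_{j,\beta}$. Integrating over $x\in\RR^3\setminus\Theta_d$, interchanging the order of integration by Fubini, and bounding the spatial integral against the full exterior $\{x:|x-s|\ge d\}$ for each $s\in S$, the singularity $|x-s|^{-2}$ is absorbed by the Jacobian $r^2\,dr$ in spherical coordinates around $s$, giving
\[
\iiint_{|x-s|\ge d}\frac{e^{-2\sqrt{|\lambda_j|}|x-s|}}{|x-s|^2}\,dx
=\frac{2\pi\,e^{-2\sqrt{|\lambda_j|}\,d}}{\sqrt{|\lambda_j|}}.
\]
Combining these ingredients I arrive at
\[
\|u_{j,\beta}\|_{L^2(\RR^3\setminus\Theta_d)}^{\,2}
\le \frac{\sigma(S)\,\|h_j\|_{L^2(S)}^{\,2}}{8\pi\,\sqrt{|\lambda_j|}}\;e^{-2\sqrt{|\lambda_j|}\,d}.
\]

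To conclude I insert the known asymptotics $\sqrt{|\lambda_j|}=\beta/2+O(1/\beta)$ from Lemma~\ref{lem1} and Proposition~\ref{thm1}, the bound $\|h_j\|_{L^2(S)}=\mathcal{O}(\beta^2)$ from Lemma~\ref{h_estim}, and the definition $d=k\beta^{-1}\log\beta$. The exponential factor satisfies $e^{-2\sqrt{|\lambda_j|}\,d}\le \beta^{-k+\varepsilon}$ for any $\varepsilon>0$ once $\beta$ is large, so altogether
\[
\|u_{j,\beta}\|_{L^2(\RR^3\setminus\Theta_d)}^{\,2}\le C\,\beta^{\,3-k+\varepsilon}.
\]
Choosing $\kappa_2:=3+2\mu+1$ (so that $k>\kappa_2$ gives $3-k+\varepsilon<-2\mu$) yields $\|u_{j,\beta}\|_{L^2(\RR^3\setminus\Theta_d)}\le\beta^{-\mu}$ uniformly in $j\le N$, and Cauchy--Schwarz then delivers the claimed bound $\bigl|\langle\phi_{j,\beta},\phi_{l,\beta}\rangle-\delta_{j,l}\bigr|\le\beta^{-\mu}$ after relabelling $\mu$. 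The only delicate point is keeping track of which constants depend on $N$ (they enter only through finitely many $\|h_j\|_{L^2(S)}$ and through the uniform lower bound on $\sqrt{|\lambda_j|}$, so uniformity in $j\le N$ is automatic); everything else is routine once the pointwise representation is in place.
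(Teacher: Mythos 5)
Your argument is correct and follows essentially the same route as the paper: reduce $\langle\phi_{j,\beta},\phi_{l,\beta}\rangle-\delta_{j,l}$ to $\iiint_{\R^3\setminus\Theta_d}|u_{j,\beta}u_{l,\beta}|\,dx$ using $g_\beta\equiv 1$ on $\Theta_d$, then control the tail via the representation \eqref{eq_gammafield}, the bound $\|h\|_{L^2(S)}=\cO(\beta^2)$ of Lemma~\ref{h_estim}, the eigenvalue asymptotics, and a large choice of $k$. The only (harmless) difference is technical: you estimate the exterior integral by Cauchy--Schwarz in the surface variable plus Fubini and an exact radial integral, whereas the paper uses the pointwise $L^1$-based bound \eqref{eq-udu} with the factor $d^{-2}$ and a splitting into a ball $B_R$ and its complement; both yield the same $\beta^{\mathrm{const}-ck}$ decay.
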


Now we can compare the eigenvalues of $H$ with those of $K_\delta$ as follows:
 % -------------- %
\begin{lemma}
\label{Kasympt}
Let $j\in\NN$ be fixed, then for any $\mu>0$ there exists $\kappa_0>0$ such that for any $k>\kappa_0$ in \eqref{eq-dk}
there holds
$\Lambda_j(K_\delta) - \beta^{-\mu} \leq  E_j(\beta)   \leq \Lambda_j(K_\delta)$
as $\beta$ is large.
\end{lemma}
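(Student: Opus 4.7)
The plan is to prove the two inequalities separately. The upper bound $E_j(\beta)\le\Lambda_j(K_\delta)$ is the easy one and I would do it first by a bracketing argument: since $S\subset\Theta_\delta$, any $u\in H^1_0(\Theta_\delta)=\cD(k_\delta)$ extended by zero to $\R^3$ lies in $H^1(\R^3)=\cD(h)$ and preserves both its $L^2$-norm and the value of the quadratic form, so the min-max principle applied to $H$ and to $K_\delta$ immediately delivers $E_j(\beta)\le\Lambda_j(K_\delta)$.

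For the lower bound $\Lambda_j(K_\delta)\le E_j(\beta)+\beta^{-\mu}$, I would fix $\mu>0$ and invoke Lemmas~\ref{lem:Lemma16} and~\ref{lem:QuasiONB} with the exponent $\mu+1$ in place of $\mu$, obtaining constants $\kappa_1,\kappa_2$; then I set $\kappa_0:=\max(\kappa_1,\kappa_2)$ (taking $N=j$). For $k>\kappa_0$ and $\beta$ sufficiently large, the $j\times j$ Gram matrix of $(\phi_{i,\beta})_{1\le i\le j}$ is within $\beta^{-(\mu+1)}$ of the identity entry-wise, hence invertible, so the test space $V:=\mathrm{span}(\phi_{1,\beta},\dots,\phi_{j,\beta})\subset H^1_0(\Theta_\delta)$ is genuinely $j$-dimensional, which is what the variational characterization of $\Lambda_j(K_\delta)$ requires.

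The main computation is then to bound $k_\delta(u,u)/\|u\|^2$ uniformly on $V$. For $u=\sum_{i=1}^{j}c_i\phi_{i,\beta}\in V$ with $\|c\|^2:=\sum|c_i|^2$, expanding and applying the entry-wise estimates of Lemmas~\ref{lem:Lemma16} and~\ref{lem:QuasiONB} yields $k_\delta(u,u)\le\sum_i |c_i|^2 E_i(\beta)+j\beta^{-(\mu+1)}\|c\|^2$ and $\|u\|^2\ge (1-j\beta^{-(\mu+1)})\|c\|^2$. The monotonicity $E_i(\beta)\le E_j(\beta)$ for $i\le j$ bounds the first sum by $E_j(\beta)\|c\|^2$; since $E_j(\beta)\sim-\beta^2/4$ by Lemma~\ref{lem1}, the numerator $E_j(\beta)+j\beta^{-(\mu+1)}$ is strictly negative and the denominator lies in $(0,1)$ for large $\beta$, so the ratio is itself bounded by $E_j(\beta)+j\beta^{-(\mu+1)}\le E_j(\beta)+\beta^{-\mu}$. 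The min-max principle then gives $\Lambda_j(K_\delta)\le E_j(\beta)+\beta^{-\mu}$, completing the proof.

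The main obstacle I would guard against is the magnitude $|E_j(\beta)|\sim\beta^2/4$: a naive estimate of the form $\Lambda_j(K_\delta)\le (E_j(\beta)+O(\beta^{-\mu-1}))/(1-O(\beta^{-\mu-1}))$ would produce a spurious error of order $\beta^{1-\mu}$ instead of $\beta^{-\mu}$. This is sidestepped by using the sign of the numerator: dividing a negative number by a positive factor smaller than $1$ makes it only smaller, so no dangerous $\beta^2$ factor leaks into the estimate. The remainder is bookkeeping — notably, choosing the exponent in Lemmas~\ref{lem:Lemma16} and~\ref{lem:QuasiONB} large enough (here $\mu+1$) to absorb the combinatorial factor $j$ arising from the double sum in the Gram-type expansion.
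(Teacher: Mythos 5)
Your overall strategy coincides with the paper's: zero extension plus min--max for $E_j(\beta)\le\Lambda_j(K_\delta)$, and for the reverse inequality the test space spanned by the cut-offs $\phi_{1,\beta},\dots,\phi_{j,\beta}$ together with Lemmas~\ref{lem:Lemma16} and~\ref{lem:QuasiONB}. However, the key step of your Rayleigh-quotient estimate fails, and it fails exactly at the point you flagged as the danger. From $k_\delta(u,u)\le A\|c\|^2$ with $A:=E_j(\beta)+j\beta^{-(\mu+1)}<0$ you cannot conclude $k_\delta(u,u)/\|u\|^2\le A$: that would require $\|u\|^2\le\|c\|^2$, whereas Lemma~\ref{lem:QuasiONB} only gives $\|u\|^2\le\big(1+j\beta^{-(\mu+1)}\big)\|c\|^2$. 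Since the bound on the numerator is negative, an upper bound on the quotient must be obtained by inserting the \emph{upper} bound on the denominator, not the lower one (your ``denominator lies in $(0,1)$'' argument goes in the wrong direction), and the best one gets from these data is
\[
\frac{k_\delta(u,u)}{\|u\|^2}\;\le\;\frac{A}{1+j\beta^{-(\mu+1)}}\;=\;A+\frac{|A|\,j\beta^{-(\mu+1)}}{1+j\beta^{-(\mu+1)}}\,,
\]
and since $|A|\sim\beta^2/4$ the extra term is of order $\beta^{1-\mu}$ --- precisely the leak you claimed to have sidestepped. With your choice of exponent $\mu+1$ the argument therefore only yields $\Lambda_j(K_\delta)\le E_j(\beta)+\cO(\beta^{1-\mu})$, which is weaker than the claimed bound and for $\mu\le 1$ is not even a decaying error.

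The repair is cheap and is in effect what the paper does: accept the leak and compensate through the exponent. Invoke Lemmas~\ref{lem:Lemma16} and~\ref{lem:QuasiONB} with exponent $\mu+2$ (or anything larger) in place of $\mu+1$; then the correctly estimated quotient gives $\Lambda_j(K_\delta)\le E_j(\beta)+C\beta^{2-(\mu+2)}\le E_j(\beta)+\beta^{-\mu}$ for large $\beta$. The paper's proof performs exactly this computation: it splits the quotient into the $E_j(\beta)$-part, divided by the upper bound $(1+C\beta^{-\mu'})\|b\|^2$, and the error part, divided by the lower bound, arrives at $\Lambda_j(K_\delta)\le E_j(\beta)+C_1\beta^{2-\mu'}$ using $E_j(\beta)=\cO(\beta^2)$, and then concludes by noting that $\mu'$ may be chosen arbitrarily large. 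Everything else in your proposal (the zero-extension upper bound, $\dim U=j$ via near-orthonormality, the monotonicity $E_i(\beta)\le E_j(\beta)$ for $i\le j$) is sound.
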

 % -------------- %
\begin{proof}
By the min-max principle we have
 % -------------- %
\begin{equation}
   \label{eq-minmax}
\Lambda_j(K_\delta)
= \min_{\genfrac{}{}{0pt}{1}{  U\subset H_0^1(\Theta_\delta)  }{\dim U=j} }  \max_{ u\in U\setminus\{0\}  }  \frac{ k_{\delta}(u,u) }{ \|u\|^2_{L^2(\Theta_\delta)} }.
\end{equation}
As the extension of any function from $H_0^1(\Theta_\delta)$ by zero belongs to the form domain of $H$, we have immediately
$E_j(\beta)   \leq \Lambda_j(K_\delta)$.

Let $U$ be the  subspace spanned by the $\phi_{n,\beta}$ with $n=1,\ldots,j$.
By Lemma~\ref{lem:QuasiONB} one has $\dim U=j$ for large $\beta$. Let $b_1 \phi_{1,\beta}+\dots+b_j \phi_{j,\beta}=:\phi\in U$,
$b=(b_1,\dots,b_j)\in \CC^j$. Due to Lemmata~\ref{lem:Lemma16} and~\ref{lem:QuasiONB} one can find $k_0>0$
such that for $k\ge k_0$ we have for sufficently large $\beta$
\begin{align*}
\big(1- C \beta^{-\mu}\big)\|b\|^2_{\CC^j} \leq &\|\phi\|^2_{L^2(\Theta_\delta)} \leq \big(1+ C \beta^{-\mu}\big)\|b\|^2_{\CC^j}, \\
k_\delta(\phi,\phi)\le \sum_{n=1}^j
E_n(\beta) |b_n|^2 &+ C \beta^{-\mu}\|b\|^2_{\CC^j}
\le \Big(E_j(\beta)  + C \beta^{-\mu} \Big) \|b\|^2_{\CC^j},
\end{align*}
where $C>0$ is independent of $b$ and $\beta$. Using $\frac{1}{2}<1- C \beta^{-\mu}<1$ and $E_j(\beta)<0$ for $\beta$ sufficently large we get
\begin{align*}
\frac{ k_\delta(\phi,\phi) }{ \|\phi\|^2_{L^2(\Theta_\delta)} }
&\leq
\frac{ E_j(\beta) \|b\|^2_{\CC^j} }{ \|\phi\|^2_{L^2(\Theta_\delta)} }
+
\frac{ C\beta^{-\mu} \|b\|^2_{\CC^j} }{ \|\phi\|^2_{L^2(\Theta_\delta)} }
\leq
\frac{ E_j(\beta) \|b\|^2_{\CC^j} }{ \big(1+ C \beta^{-\mu}\big)\|b\|^2_{\CC^j} }
+
\frac{ C\beta^{-\mu} \|b\|^2_{\CC^j} }{ \big(1- C \beta^{-\mu}\big)\|b\|^2_{\CC^j} }  \\
&=
E_j(\beta) \left( 1 - \frac{ C \beta^{-\mu} }{ 1+ C \beta^{-\mu} } \right)
+
2 C\beta^{-\mu}
\leq
E_j(\beta)  -  C \beta^{-\mu}E_j(\beta) +  2 C\beta^{-\mu}  .
\end{align*}
Testing on the subspace $U$ in \eqref{eq-minmax}
and using $E_j(\beta)=\cO(\beta^2)$ we obtain $\Lambda_j(K_\delta)\le E_j(\beta)+C_1 \beta^{2-\mu}$,
where $C_1>0$ is independent of $\beta$. As $\mu>0$ can be chosen arbitrary, the result follows.
\end{proof}

 % -------------- %
\begin{lemma}\label{thm2}
Let the surface $S$ be with a compact $C^2$-boundary,
then for each fixed $j$ there holds
$E_j(\beta) \geq -\beta^2/4 + \mu_j^D + \cO(\beta^{-1} \log\beta)$
as $\beta\to+\infty$.
\end{lemma}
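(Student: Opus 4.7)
By Lemma~\ref{Kasympt}, for any $\mu>0$ one has $E_j(\beta)\geq\Lambda_j(K_\delta)-\beta^{-\mu}$ provided the constant $k$ in \eqref{eq-dk} is chosen sufficiently large; taking $\mu=2$, this reduces the proof to establishing the lower bound
\[
\Lambda_j(K_\delta)\geq-\frac{\beta^2}{4}+\mu_j^D+\cO\bigl(\beta^{-1}\log\beta\bigr)\quad\text{as }\beta\to+\infty.
\]
I would obtain this by comparing $K_\delta$ with the tube operator $H^D_a$ of Remark~\ref{rem3} via domain monotonicity.

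The first step is the geometric inclusion $\Theta_\delta\subset\Xi_a$ for $a:=C_0\delta$ with a suitable constant $C_0>1$ depending only on $S_1$. Indeed, any $x\in\Theta_\delta$ satisfies $\dist(x,S_1)\leq\dist(x,\overline{S})<\delta$, so that its orthogonal projection $s'\in S_1$ through the normal bundle is well defined, $x=s'+t\nu(s')$ with $|t|<\delta$, and for any nearest point $s\in\overline{S}$ to $x$ one has $|s-s'|<2\delta$. The $C^4$-smoothness and compactness of $S_1$ then yield $\dist_{S_1}(s',S)\leq C|s-s'|<2C\delta$, so choosing $C_0\geq 2C$ gives $s'\in S_a$ and hence $x\in\Xi_a$. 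With $k$ in \eqref{eq-dk} large enough, the parameter $a=C_0\delta$ has the form $\xi\beta^{-1}\log\beta+\cO(\beta^{-1})$ with $\xi$ as large as required by Remark~\ref{rem3}. Extending any $u\in H_0^1(\Theta_\delta)$ by zero to $\Xi_a$ preserves the values of the form
\[
\tilde k_a(u,u):=\iiint_{\Xi_a}|\nabla u|^2\,dx-\beta\iint_S|u|^2\,d\sigma,\quad \cD(\tilde k_a)=H_0^1(\Xi_a),
\]
so the associated operator $\tilde K_a$ satisfies $\Lambda_j(\tilde K_a)\leq\Lambda_j(K_\delta)$ by the min-max principle. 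Moreover, $S\subset S_a$ implies $\tilde k_a(u,u)-h^D_a(u,u)=\beta\iint_{S_a\setminus S}|u|^2\,d\sigma\geq 0$, so $\tilde k_a\geq h^D_a$ and $\Lambda_j(\tilde K_a)\geq\Lambda_j(H^D_a)$. Combining these with Remark~\ref{rem3} gives
\[
\Lambda_j(K_\delta)\geq\Lambda_j(H^D_a)=-\frac{\beta^2}{4}+\mu_j^D(a)+\cO\bigl(\beta^{-1}\log\beta\bigr).
\]

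The final step is to show that $\mu_j^D-\mu_j^D(a)\leq C_j a=\cO(\beta^{-1}\log\beta)$, i.e.\ that the outward enlargement $S\rightsquigarrow S_a$ decreases the $j$-th Dirichlet eigenvalue of $-\Delta_S+K-M^2$ by at most $\cO(a)$. This is exactly where the $C^2$-regularity of $\partial S$ is used: $S_a$ is then an outward $C^2$-normal perturbation of $S$ inside $S_1$ of size $a$, and a Hadamard-type shape-derivative formula adapted to the operator $L^D_S$ on the Riemannian surface $S\subset S_1$ yields the desired linear estimate, with $C_j$ depending only on the $j$-th eigenfunction of $L^D_S$. Substituting this back into the previous inequality gives the required lower bound on $\Lambda_j(K_\delta)$.

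\textbf{Main obstacle.} The delicate step is the final domain-perturbation estimate. Standard Hadamard formulas are phrased for the flat Laplacian on Euclidean domains, whereas here one must control the shape derivative of Dirichlet eigenvalues of $-\Delta_S+K-M^2$ on a Riemannian surface whose boundary is perturbed in the geodesic normal direction inside $S_1$. This requires the $C^2$-smoothness of $\partial S$, ensuring that the boundary normal is well defined and that eigenfunctions of $L^D_S$ are sufficiently regular near $\partial S$ for the boundary-integral representation of the shape derivative to make sense; without this regularity, the rate $\cO(a)$ cannot be guaranteed, which is presumably the reason why the improved remainder estimate in Theorem~\ref{thmain} requires a $C^2$-boundary. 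A secondary, essentially bookkeeping, difficulty is the careful geometric verification of $\Theta_\delta\subset\Xi_a$ near $\partial S$, where one must quantitatively compare the Euclidean distance to $S$ with the geodesic distance inside $S_1$.
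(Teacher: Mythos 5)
Your proposal is correct and follows essentially the same route as the paper: reduce to a lower bound on $\Lambda_j(K_\delta)$ via Lemma~\ref{Kasympt}, compare $K_\delta$ with $H^D_a$ through the inclusion $\Theta_\delta\subset\Xi_a$ with $a$ of order $\beta^{-1}\log\beta$ (the extra interaction term on $S_a\setminus S$ having the right sign), invoke the separation of variables of Remark~\ref{rem3}, and finish with the stability estimate $\mu_j^D(a)=\mu_j^D+\cO(a)$. The final step, which you propose to handle by a Hadamard-type shape-derivative argument requiring the $C^2$-boundary, is exactly what the paper does by citing the standard domain-perturbation result (Grinfeld's Hadamard formula), so the two arguments coincide in substance.
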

 % -------------- %
\begin{proof}
For sufficiently small $a>0$, consider the surface
$S_a=\{z\in S_1:\, \dist_{S_1}(z,S)<a\}$,
the distance $\dist_{S_1}$ being measured along the geodesics of $S_1$, and
let $\mu_j^D(a)$ be the $j$th Dirichlet eigenvalue of  $-\Delta_{S_a} + K - M^2$ on $S_a$.
It is a standard result of the domain perturbation theory,
see e.g. \cite{grinfeld}, that each $a\mapsto \mu^D_j(a)$ is Lipschitz for small $a$,
in particular, \begin{equation}
\label{mustability}
\mu_j^D(a) = \mu_j^D+\cO(a).
\end{equation}
Define
\[
\Xi_a := F\big(S_a\times(-a,a)\big) = \big\{s+\nu(s)t\; :\; s\in S_a,\, t\in(-a,a) \big\}
\]
and denote by $H^D_a$ the self-adjoint operator in $L^2(\Xi_a)$ generated
by the quadratic form
\[
h^D_a(u,u)=\iiint_{\Xi_a} |\nabla u|^2 dx -\beta \iint_{S_a} |u|^2 d\sigma,
\quad
\cD(h^D_a)=H^1_0(\Xi_a).
\]
Choose $k\ge 3$ to obtain the estimate of Lemma~\ref{Kasympt} with $\mu=1$
and set
\[
\delta:=\dfrac{1+k\log \beta}{\beta}, \quad a:= \dfrac{2k \log \beta}{\beta},
\]
For large $\beta$ we have the inclusion $\Theta_\delta\subset \Xi_a$ and
the inequalities $\Lambda_j(H^D_a)\le\Lambda_j(K_\delta)$.
Due to the Lemma \ref{Kasympt} and our choice of $k$ we have
$\Lambda_j(K_\delta)\le E_j(\beta)+\beta^{-1}$.
Furthermore, the operator $H^D_a$ can be studied using the separation variables (see Remark~\ref{rem3}), which gives
\[
\Lambda_j(H^D_a)=-\beta^2/4+\mu^D_j(a)+\cO(\beta^{-1}\log \beta)
=-\beta^2/4 + \mu^D_j +\cO(\beta^{-1}\log \beta),
\]
where we used \eqref{mustability}. Putting together the three inequalities for the eigenvalues
one arrives at
\[
-\beta^2/4 + \mu^D_j +\cO(\beta^{-1}\log \beta)
=\Lambda_j(H^D_a)\le \Lambda_j(K_\delta)
\le E_j(\beta)+\beta^{-1},
\]
which gives the sought estimate $E_j(\beta)\ge -\beta^2/4 + \mu_j^D + \cO(\beta^{-1}\log \beta)$.
\end{proof}

Theorem~\ref{thmain} is now completely contained in Lemma~\ref{lem1}, Proposition~\ref{thm1}
and Lemma~\ref{thm2}.

\subsection{Proof of Lemma \ref{u-u0}}\label{proofuu0}

We assume first that the surface $S_1$ can be covered by a single map $\Phi:D_1\to S_1\subset\R^3$, where
$D_1\subset\R^2$ is an open set and $\Phi$ is $C^4$. Denote
$D:=\Phi^{-1}(S)$. Remark that then there exists a bounded function $R:D\to\R^3$ such that
\begin{equation}      \label{eq-RR}
\Phi(z) = \Phi(y) + \Phi'(y)(z-y) + R(z,y)|z-y|^2
\end{equation}
holds for $z,y\in D$, where $\Phi'$ is the differential of $\Phi$.
Recall that
\[
d\sigma(s)=\left| \dfrac{\partial\Phi}{\partial y_1}\times\dfrac{\partial\Phi}{\partial y_2} \right|  \,dy_1 dy_2\equiv\sqrt{g(y)} dy_1 dy_2 \quad
\text{ for } s=\Phi(y_1,y_2).
\]
Furthermore, the map
$D\times(-\delta,\delta)\ni(y,t)\mapsto \Psi(y,t):= \Phi(y)+t \nu\big(\Phi(y)\big)\in\boxminus_\delta$
is a diffeomorphism, and  one can find $m>0$ such that
\begin{equation}
     \label{eq-phim}
\big|\Psi(y,t) - \Psi(z,s)\big|^2\geq m^2 \Big( |y-z|^2 +|s-t|^2\Big) \text{ for } (y,t),(z,s)\in D\times(-\delta,\delta).
\end{equation}
For $x_0\in S$, set $y := \Phi^{-1}(x_0)\in D$ and $x:=\Psi(y,t_0)\in \boxminus_\delta$ with $|t_0|<\delta$. We have
 % -------------- %
\begin{multline}
u\circ\Psi (y,0) - u\circ \Psi (y,t_0)
= -\int_0^{t_0} \partial_t \big( u\circ\Psi \big)(y,t) \; dt
= -\int_0^{t_0}  \Big\langle \nabla u\big(\Psi(y,t)\big), \nu(x_0)\Big\rangle  \; dt .
\end{multline}
 % -------------- %
Using~\eqref{eq_gammafield} we estimate
\[
\Big| \Big\langle{ \nabla u(x) , \nu(x_0) }\Big\rangle\Big|  \leq \frac{1}{4\pi} \iint_S \left[  \sqrt{|\lambda|} + \frac{1}{|x-s|}  \right] \frac{ e^{-\sqrt{|\lambda|}|x-s|} }{|x-s|^2} \cdot \big|h(s)\big|
  \cdot \Big|\Big\langle{ x-s ,\nu(x_0) }\Big\rangle\Big|  \;ds,
\]
and with the help of \eqref{eq-phim} one estimates, for $z\in D$,
\[
\big|\Psi(y,t)-\Phi(z)\big|
=\big|\Psi(y,t)-\Psi(z,0)\big| \geq \dfrac{m}{\sqrt 2} \big(|y-z|+|t|\big)
\]
 % -------------- %
and denoting $\omega(z,y) := - \big\langle R(z,y) , \nu(x_0) \big\rangle$ with $R$ from \eqref{eq-RR} we get
 % -------------- %
\begin{align*}
\big\langle \Psi(y,t)-\Phi(z) , \nu(x_0) \big\rangle &
= \big\langle \Phi(y)+t\nu(x_0) - \Phi(z), \nu(x_0) \big\rangle
=  \big\langle \Phi(y)- \Phi(z) , \nu(x_0) \big\rangle + t  \\
& =  \big\langle  -\Phi'(y)(z-y) - R(z,y)|z-y|^2 , \nu(x_0) \big\rangle + t
=  \omega(z,y)|z-y|^2 + t,
\end{align*}
and we note that  $\|\omega\|_\infty\leq \|R\|_\infty$. Hence,
 % -------------- %
\begin{align*}
\Big| \Big\langle\nabla u(x), \nu(x_0) \Big\rangle\Big|
& \leq
\frac{1}{4\pi} \iint_D \left[  \sqrt{|\lambda|} + \frac{1}{\frac{m}{\sqrt{2}}\big( |y-z|+|t| \big)} \right]
\frac{ e^{-\sqrt{|\lambda|}\frac{m}{\sqrt{2}}\big( |y-z|+|t|  \big)} }{m^2(|y-z|^2+|t|^2)} \\
& \hspace{10em}
\cdot\Big|h\big(\Phi(z)\big)\Big|  \cdot \Big|t+\omega(z,y)|z-y|^2\Big| \cdot
 \sqrt{g(z)}   \;dz \;.
\end{align*}
 % -------------- %
Denote $\mu:= m\sqrt{|\lambda|/2}$, $\,c_r:=\max\big\{1, \|R\|_\infty, \frac{\sqrt{2}}{m} \big\}^2$,
and introduce $\tilde{h}:\R^2\to\R$ by
 % -------------- %
\begin{align*}
\tilde{h}(z) &:=
\begin{cases}
\big|h\big(\Phi(z)\big)\big| \cdot \sqrt{g(z)} &\text{for } z\in D\,,\\
0	&\text{for } z\in \R^2\setminus D\,.
\end{cases}
\end{align*}
Then the preceding inequality takes the form
\[
\Big| \Big\langle \nabla u(x) , \nu(x_0) \Big\rangle\Big|
\leq
\frac{c_r}{4\pi m^2}   \,e^{-\mu|t|}  \iint_{\RR^2} \left[  \sqrt{|\lambda|} + \frac{1}{|y-z|+|t|}  \right]
\frac{ |y-z|^2+|t| }{|y-z|^2+|t|^2} \cdot e^{-\mu|y-z|} \cdot \widetilde{h}(z) \;dz,
\]
and using the functions $F_t, G_t:\R^2\to\R$,
 % -------------- %
$$
F_t(z) := \frac{ |z|^2+|t| }{|z|^2+|t|^2} \cdot e^{-\mu|z|}, \quad
G_t(z) := \frac{ 1 }{|z|+|t|} \cdot F_t(z)\,,
$$
it can be rewritten as
$$
\Big| \langle{ \nabla u(x) , \nu(x_0) }\Big\rangle \Big|
\leq
\frac{c_r}{4\pi m^2}\,   e^{-\mu|t|}
\Big[ \sqrt{|\lambda|}  \big(F_t \star\tilde{h} \big)(y) + \big(G_t \star \tilde{h} \big)(y) \Big]\,,
$$
where $\star$ means the convolution in $L^2(\R^2)$.
 % -------------- %
Denoting $c_d := \sqrt{\|\det \Psi'\|_\infty}$ and combining the preceding estimates
we then obtain
% -------------- %
\begin{equation}
\begin{aligned}
\frac{1}{c_d^2} \|u_0-u\|_{L^2(\boxminus_\delta)}^2
&= \frac{1}{c_d^2}\big\|[u_0\circ\Psi -u\circ\Psi  ] \cdot \sqrt{|\det \Psi'|} \,\big\|_{L^2(D\times (-\delta,\delta))}^2  \\
& \leq\int_{-\delta}^\delta  \iint_{D}    \big|u\circ\Psi(y,0)-u\circ\Psi(y,t_0)\big|^2  dy \,dt_0  \\
&=\int_{-\delta}^\delta  \iint_{D} \left| \int_0^{t_0} \Big|\Big\langle \nabla u\big(\Psi(y,t)\big), \nu(x_0)\Big\rangle\Big|  \; dt  \right|^2  \;dy \,dt_0  \\
&\leq\int_{-\delta}^\delta |t_0| \int_0^{|t_0|} \iint_{D} \Big| \Big \langle \nabla u(\Psi(y, t\,{\rm sgn}(t_0)) , \nu(x_0)\Big\rangle\Big|^2   \;dy \;dt \;dt_0  \\
&\leq 2 \left(\frac{c_r}{4\pi m^2} \right)^2
\int_0^\delta t_0 \int_0^{t_0} e^{-2\mu|t|}  \iint_{\R^2}
\Big| \sqrt{|\lambda|}  \big(F_t\star\tilde{h} \big)(y) +  \big(G_t\star\tilde{h} \big)(y)  \Big|^2
\,dy \,dt \,dt_0  \\
&\leq 2 \left(\frac{c_r}{4\pi m^2} \right)^2
\int_0^\delta t_0 \int_0^{t_0} e^{-2\mu t}
\Big( \sqrt{|\lambda|}\,\|F_t\star \tilde{h}\|_{L^2(\R^2)}   + \|G_t\star \tilde{h}\|_{L^2(\R^2)}   \Big)^2
\,dt \,dt_0 .
\end{aligned}
      \label{eq-qqq}
\end{equation}
 % -------------- %
We have $\dfrac{r^2+|t|}{r^2+|t|^2}\,r \leq r+1$, and the Young inequality gives
 % -------------- %
\begin{multline*}
\|F_t\star\tilde{h}\|_{L^2(\R^2)} \leq \|F_t\|_{L^1(\R^2)} \|\tilde{h}\|_{L^2(\R^2)}
\leq \|{g}^{\frac{1}{4}}\|_\infty  \|h\|_{L^2(S)} \|F_t\|_{L^1(\R^2)}\\
 \leq
2\pi  \|{g}^{\frac{1}{4}}\|_\infty\|h\|_{L^2(S)} \int_0^\infty \frac{ r^2+|t| }{r^2+|t|^2} \cdot e^{-\mu r} r \;dr
\leq
 2\pi  \|{g}^{\frac{1}{4}}\|_\infty\|h\|_{L^2(S)} \int_0^\infty (r+1) e^{-\mu r} \;dr .
\end{multline*}
If $\lambda\leq -\frac{8}{m^2}$ we have $\mu\geq2$ and therefore $\sqrt{|\lambda|}\frac{\mu+1}{\mu^2} = \frac{\sqrt{2}\mu}{m} \cdot \frac{\mu+1}{\mu^2}  < \frac{3}{m} \leq \frac{3e^{\mu|t|}}{m}$ for all $t\in\R$. Hence
\begin{align*}
\|F_t\star\tilde{h}\|_{L^2(\R^2)}
\leq
2\pi  \frac{\mu+1}{\mu^2} \|{g}^{\frac{1}{4}}\|_\infty \|h\|_{L^2(S)}
\leq
2\pi \|{g}^{\frac{1}{4}}\|_\infty \|h\|_{L^2(S)} e^{\mu|t|} \frac{3}{m\sqrt{|\lambda|}}.
\end{align*}
Furthermore,
\begin{align*}
\|G_t\star\tilde{h}\|_{L^2(\R^2)} & \leq
\|G_t\|_{L^1(\R^2)} \|\tilde{h}\|_{L^2(\R^2)}
\leq 2\pi  \|{g}^{\frac{1}{4}}\|_\infty \|h\|_{L^2(S)} \int_0^\infty \frac{1}{r+|t|} \cdot \frac{ r^2+|t| }{r^2+|t|^2} \cdot e^{-\mu r} r \;dr .
\end{align*}
As $\frac{1}{r+|t|} \cdot \frac{ r^2+|t| }{r^2+|t|^2} r \leq \frac{r+1}{r+|t|} \leq \frac{r+|t|+1}{r+|t|}$ we have for $\mu\geq2$
\begin{align*}
\int_0^\infty \frac{1}{r+|t|} \cdot \frac{ r^2+|t| }{r^2+|t|^2} \cdot e^{-\mu r} r \;dr
\leq
\int_{|t|}^\infty \frac{r+1}{r}  e^{-\mu (r-|t|)} \;dr
=
e^{\mu|t|}\left(  \int_{|t|}^\infty e^{-\mu r}  \;dr + \int_{|t|}^\infty \frac{ e^{-\mu r} }{r}   \;dr\right) \\
\leq
e^{\mu|t|}  \left(  \frac{ e^{-\mu |t|} }{\mu} +
\int_{|t|}^1 \frac{ 1 }{r}   \;dr  +  \int_{1}^\infty  e^{-\mu r}   \;dr  \right)
=  e^{\mu|t|}  \left(  \frac{ e^{-\mu |t|} }{\mu} - \log|t|  +  \frac{ e^{-\mu} }{\mu}  \right)
\leq  e^{\mu|t|}  (1-\log|t|).
\end{align*}
This leads to
\begin{align*}
\sqrt{|\lambda|} \|F_t\star\widetilde{h}\|_{L^2(\R^2)} + \|G_t\star\widetilde{h}\|_{L^2(\R^2)}
\leq
2\pi \|{g}^{\frac{1}{4}}\|_\infty \|h\|_{L^2(S)} e^{\mu |t|}  \left(\frac{3}{m} + 1-\log|t|\right) .
\end{align*}
 % -------------- %
The substitution into \eqref{eq-qqq} gives
\begin{align*}
\|u_0-u\|_{L^2(\boxminus_\delta)}^2 & \leq
\left(\frac{\sqrt{2}c_rc_d}{4\pi m^2} \right)^2
\int_0^\delta t_0 \int_0^{t_0} e^{-2\mu t} \left( 2\pi \|{g}^{\frac{1}{4}}\|_\infty \|h\|_{L^2(S)} e^{\mu t}
\left(\frac{3}{m}+1-\log t\right) \right)^2\, dt \, dt_0\\
& \leq
\left(\frac{\sqrt{2}c_rc_d}{2 m^2}\right)^2 \|\sqrt{g}\|_\infty \|h\|_{L^2(S)}^2
\int_0^\delta t_0 \int_0^{t_0} \big(\log t-\log c_m\big)^2\, dt \, dt_0
\end{align*}
with $c_m:= e^{\frac{3}{m}+1}$. Assuming now that $\delta\geq t_0$ is sufficiently small we have
\begin{align*}
\int_0^{t_0} \big(\log t-\log c_m\big)^2\, dt
=
c_m\int_0^{\frac{t_0}{c_m}} (\log s)^2\, ds
=
c_m \Big[ s(\log s -1)^2 +s  \Big]_0^{\frac{t_0}{c_m}}
\leq \sqrt{t_0}
\end{align*}
and
$$
\int_0^\delta t_0 \int_0^{t_0} \big(\log t-\log c_m\big)^2 \,dt\,dt_0\leq
\int_0^\delta t^{3/2}_0 \,dt_0=\dfrac{2}{5} \delta^{5/2}\le \delta^2.
$$
Hence there exist $\lambda_0<0$ and $\delta_0>0$ such that for $\lambda<\lambda_0$ and $\delta\in(0,\delta_0)$ we have
\[
\|u_0-u\|_{L^2(\boxminus_\delta)}\leq C \|h\|_{L^2(S)}  \delta,
\quad
C:=\left(\frac{\sqrt{2} c_r c_d}{2 m^2}\right) \|{g}^{\frac{1}{4}}\|_\infty.
\]

For the general case ($S_1$ is not covered by a single map) we represent $S_1=\bigcup_{\alpha=1}^N S^\alpha$, where
each $S^\alpha$ is covered by a single map $\Phi^\alpha:D^\alpha\to S^\alpha$. Furthermore, represent $h=h_1+\dots+ h_N$, where the supports
of $h_\alpha$
have only zero-measure intersections in $S$
and such that the support of each term
$h_\alpha$ is contained in $S^\alpha$. Without loss of generality we may assume that for sufficiently
small (but fixed) $\rho>0$ there holds $|s-x|>\rho$ for $s\in \supp h_\alpha$,
$x\in \boxminus_\delta \setminus F\big(S^\alpha\times(-\delta,\delta)\big)$
with any $\alpha$; here $F$ is the map given in~\eqref{eq-sff}.
Consider the functions $u^\alpha$ associated with $h_\alpha$ by \eqref{eq_gammafield}
and represent $u^\alpha=\chi_\alpha u^\alpha +(1-\chi_\alpha) u^\alpha$, where $\chi_\alpha\big(s+t\nu(s)\big)=1$
if $s\in S^\alpha$, otherwise $\chi_\alpha\big(s+t\nu(s)\big)=0$.
Now we pick an arbitrary $\alpha$. According to the first part of the proof we have $\|\chi_\alpha (u^\alpha_0-u^\alpha)\|_{L^2(\boxminus_\delta)}\leq C \|h_\alpha\|_{L^2(S)} \delta$ if $\beta$ is large and $\delta$ is small, with some $C>0$ common for all $\alpha$.
On the other hand, due to \eqref{eq-udu}, for any $x\in\boxminus_\delta\setminus\supp \chi_\alpha$ we have
\[
\big|u^\alpha(x)\big| \le \dfrac{e^{-\sqrt {-\lambda}\rho}}{4\pi \rho} \|h_\alpha\|_{L^1(S)}
\le \sqrt{\area S^\alpha} \dfrac{e^{-\sqrt {-\lambda}\rho}}{4\pi \rho} \|h_\alpha\|_{L^2(S)},
\]
which gives
\begin{multline*}
\big\|(1-\chi_\alpha)(u^\alpha_0-u^\alpha)\big\|_{L^2(\boxminus_\delta)}\le
\big\|(1-\chi_\alpha)u^\alpha_0\big\|_{L^2(\boxminus_\delta)}
+
\big\|(1-\chi_\alpha)u^\alpha\big\|_{L^2(\boxminus_\delta)}\\
\le
2 \sqrt{\area S^\alpha} \dfrac{e^{-\sqrt {-\lambda}\rho}}{4\pi \rho}
\sqrt {2\delta (\area S) + {\mathcal O}(\delta^2)}  \|h_\alpha\|_{L^2(S)}\le \dfrac{\area S}{\pi \rho} e^{-\sqrt {-\lambda}\rho}  \sqrt \delta\|h_\alpha\|_{L^2(S)}.
\end{multline*}
Taking $\lambda\le -C_1 (\log \delta)^2$ with a sufficiently large $C_1>0$, which can be chosen the same for all $\alpha$, gives
$\big\|(1-\chi_\alpha)(u^\alpha_0-u_\alpha)\big\|_{L^2(\boxminus_\delta)}\le \delta \|h_\alpha\|_{L^2(S)}$
and
\[
\|u^\alpha_0-u^\alpha\|_{L^2(\boxminus_\delta)}
\le
 \big\|\chi_\alpha(u^\alpha_0-u^\alpha)\big\|_{L^2(\boxminus_\delta)}
+
\big\|(1-\chi_\alpha)(u^\alpha_0-u^\alpha)\big\|_{L^2(\boxminus_\delta)}
\le (C+1) \delta \|h_\alpha\|_{L^2(S)}.
\]
Finally,
\[
\|u_0-u\|_{L^2(\boxminus_\delta)}\le \sum_{\alpha=1}^N \|u^\alpha_0-u^\alpha\|_{L^2(\boxminus_\delta)}
\le (C+1)\delta \sum_{\alpha=1}^N \|h_\alpha\|_{L^2(S)} = (C+1)\delta \|h\|_{L^2(S)},
\]
which gives the result with $C_2:=C+1$.\hfill$\square$

\subsection{Proof of Lemma \ref{lem:Lemma16}}\label{prooflem16}

Let us first give a preliminary estimate:
\begin{lemma}\label{lem:estim_nablag}
There are constants $C_1,C_2>0$ such that the following relations hold for large $\beta$:
 % -------------- %
$$
\iiint_{ \Omega(\beta) }   |\nabla g_\beta(x)|   \; dx     \leq C_1\,,  \quad
\iiint_{ \Omega(\beta) }   |\nabla g_\beta(x)|^2 \; dx     \leq C_2\, \beta \log\beta\,.
$$
 % -------------- %
\end{lemma}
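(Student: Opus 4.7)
My plan is to proceed in three steps: pointwise control of $\nabla g_\beta$, reduction to a one-dimensional integral via the coarea formula, and a uniform area estimate for the inner parallel surfaces of $\partial\Theta_\delta$. Writing $g_\beta=\gamma\circ\varphi_\beta$ with $\varphi_\beta(x):=\bigl[\log\dist(x,\partial\Theta_\delta)+\log\beta\bigr]/\log(\log\beta)+1$, the chain rule combined with the $1$-Lipschitz property of the distance function (so that $|\nabla\dist(\cdot,\partial\Theta_\delta)|\le 1$ a.e.) and the boundedness of $\gamma'$ yields
\begin{equation}\label{dgbound}
|\nabla g_\beta(x)|\le\dfrac{\|\gamma'\|_\infty}{\log(\log\beta)\cdot\dist(x,\partial\Theta_\delta)}\quad\text{a.e. on }\Omega(\beta),
\end{equation}
and $\nabla g_\beta=0$ off $\Omega(\beta)$ by Lemma~\ref{lem:SuppNabla}.

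Next I would apply the coarea formula to the $1$-Lipschitz function $f(x):=\dist(x,\partial\Theta_\delta)$, which satisfies $|\nabla f|=1$ a.e. on $\Theta_\delta$. For $k=1,2$ the triple integral then reduces to
\[
\iiint_{\Omega(\beta)}|\nabla g_\beta|^{k}\,dx\le\bigg(\dfrac{\|\gamma'\|_\infty}{\log(\log\beta)}\bigg)^{k}\int_{1/(\beta\log\beta)}^{1/\beta}\dfrac{A(r)}{r^{k}}\,dr,
\]
where $A(r):=\area\bigl(\{x\in\Theta_\delta:\,f(x)=r\}\bigr)$ is the area of the inner $r$-parallel set of $\partial\Theta_\delta$. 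Granted a uniform bound $A(r)\le A_0$ for all small $r>0$ and large $\beta$, the remaining one-dimensional integrals are elementary: for $k=1$, $\int dr/r=\log\log\beta$ cancels the prefactor and produces the $C_1$-bound, while for $k=2$, $\int dr/r^2\le\beta\log\beta$ yields the even sharper estimate $C\beta\log\beta/(\log\log\beta)^2\le C_2\beta\log\beta$.

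The main obstacle is precisely the uniform area bound $A(r)\le A_0$. For $\delta$ smaller than the reach of $S_1$ and $\partial S$ of class $C^2$, $\partial\Theta_\delta$ is a compact $C^{1,1}$-surface decomposing as the union of two parallel copies of $\overline S$ translated by $\pm\delta\nu$ and a half-tube of radius $\delta$ around $\partial S$, the flat and curved pieces joining with a common tangent plane along two $C^2$ curves. The inner $r$-parallel surface inherits the same decomposition with the tube radius replaced by $\delta-r$, and a direct computation via the tube formula gives $A(r)\le 2\area(S)+\pi\delta\cdot|\partial S|+\cO(\delta)$, which stays uniformly bounded as $\beta\to+\infty$. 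Inserting this into the previous display closes the argument.
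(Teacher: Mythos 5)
Your proposal is correct and follows essentially the same route as the paper: the same pointwise bound $|\nabla g_\beta|\le \mathrm{const}\,\bigl(\log(\log\beta)\,\dist(x,\partial\Theta_\delta)\bigr)^{-1}$, followed by reduction to the one-dimensional integrals $\int_{1/(\beta\log\beta)}^{1/\beta} r^{-\nu}\,dr$ for $\nu=1,2$. Your coarea argument with the uniform bound $A(r)\le A_0$ on the areas of the inner parallel surfaces is just a more explicit justification of the constant (the paper's $C_5$) that the paper obtains in one sentence by noting that the normal coordinate is the distance in the tubular pieces over $S$, over the boundary arcs and near their junctions.
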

 % -------------- %
\begin{proof}
For  $x\in \Theta_\delta$ we have
 % -------------- %
$$
\partial_{x_k} g_\beta(x) =
\gamma'\left( \frac{  \log\big(\dist\big(x,\partial \Theta_\delta \big)\big)  + \log\beta }{ \log(\log\beta) } + 1 \right) \cdot
\frac{ 1 }{ \log(\log\beta) }   \cdot \frac{  \partial_{x_k}  \dist\big(x,\partial \Theta_\delta \big)   }{ \dist\big(x,\partial \Theta_\delta \big) }\,.
$$
 % -------------- %
Since $|\partial_{x_k}  \dist\big(x,\partial \Theta_\delta \big)| \leq 1$ and $\|\gamma'\|_\infty <\infty$ one can find a constant $\tilde{C}>0$ such that
 % -------------- %
\begin{equation}
\label{dgbound}
|\nabla g_\beta(x)| \leq \frac{\tilde{C}}{\log(\log\beta)} \cdot \frac{1}{\dist\big(x,\partial \Theta_\delta \big)}\,,
\end{equation}
 % -------------- %
and consequently, for $\nu \in \{1,2\}$ we get
 % -------------- %
\begin{multline*}
\iiint_{ \Omega(\beta) }  |\nabla g_\beta(x)|^\nu   \;dx
\leq
\frac{\tilde{C}^\nu}{\big(\log(\log\beta)\big)^\nu} \cdot
\iiint_{ \Omega(\beta) }  \frac{1}{  |\dist\big(x,\partial \Theta_\delta \big)|^\nu  }   \; dx  \\
\leq \frac{\tilde{C}^\nu}{\big(\log(\log\beta)\big)^\nu} \cdot \left( \; C_5
\int_{ \frac{1}{\beta\log\beta} }^{ \frac{1}{\beta} }  \frac{1}{r^\nu} \; dr  \right)
\end{multline*}
 % -------------- %
with a certain constant $C_5>0$. Indeed, note that the normal variable $t$ is the distance of points $x\in \boxminus_\delta$ to $S$, and similarly for the points in tubular neighborhoods of the $C^2$ arcs forming $\partial S$ and the balls centered at their junctions. Evaluating the integrals,
 % -------------- %
$$
\int_{ \frac{1}{\beta\log\beta} }^{ \frac{1}{\beta} }  \frac{1}{r} \;dr
= \log(\log\beta)\,, \quad
\int_{ \frac{1}{\beta\log\beta} }^{ \frac{1}{\beta} } \frac{1}{r^2} \; dr
= -\beta+\beta\log\beta\,,
$$
 % -------------- %
we arrive at
 % -------------- %
\begin{equation*}
\iiint_{ \Omega(\beta) }  |\nabla g_\beta(x)|   \; dx  \leq  \tilde{C} C_5\,, \quad
\iiint_{ \Omega(\beta) }  |\nabla g_\beta(x)|^2   \; dx \leq
\tilde{C}^2 C_5 \frac{ \beta\log\beta-\beta }{\big(\log(\log\beta)\big)^2}
\leq
\tilde{C}^2 C_5 \beta\log\beta\,,
\end{equation*}
 % -------------- %
which proves the claim of the lemma.
\end{proof}

Now we are ready for the proof of Lemma~\ref{lem:Lemma16}. Remark that by Proposition~\ref{thm1}, for any constants $0<L_1<1/2<L_2$,
and all $j=1,\dots, N$ there holds
\begin{equation}
\label{L_estim}
-L_2^2 \beta^2 < E_j(\beta) < -L_1^2 \beta^2 \text{ for large $\beta$}.
\end{equation}

For the sake of brevity we set $\phi:=\phi_{j,\beta}$ and $\psi:=\phi_{l,\beta}$ as well as $u:=u_{j,\beta}$ and $v:=u_{l,\beta}$, and  moreover, $\lambda := E_j(\beta)$ and $\mu := E_l(\beta)$. Using
\[
\langle{\nabla u,\nabla v}\rangle_{L^2(\R^3)}    -\beta\iint_{S}{\overline{u}\cdot v}\,ds = E_j(\beta)\,\delta_{j,l},
\]
we get the identity
 % -------------- %
\begin{multline*}
\langle\nabla\phi_{j,\beta},\nabla\phi_{l,\beta}\rangle_{L^2(\R^3)}   -  \beta \iint_S \overline{\phi_{j,\beta}}\cdot \phi_{l,\beta} ds  - E_j(\beta)  \delta_{j,l} \\
=\langle{\nabla \phi,\nabla\psi}\rangle_{L^2(\Theta_\delta)} - \beta \iint_{S}{}{\overline{\phi}\cdot \psi}\;ds -
\left( \langle{\nabla u,\nabla v}\rangle_{L^2(\R^3)} - \beta \iint_{S}{}{\overline{u|_S}\cdot v|_S}\;ds \right)\\
= - \langle{\nabla u,\nabla v}\rangle_{L^2(\R^3)}  + \langle{\nabla \phi,\nabla\psi}\rangle_{L^2(\Theta_\delta)}\,.
\end{multline*}
 % -------------- %
We have
 % -------------- %
\begin{align*}
\langle{\nabla \phi,\nabla\psi}\rangle_{L^2(\Theta_\delta)}&
= \iiint_{\Theta_\delta}{}{ \overline{\Big(\nabla g_\beta(x)\cdot u(x) +  g_\beta(x)\cdot \nabla u(x)\Big)} \cdot
\Big( \nabla g_\beta(x)\cdot v(x) +  g_\beta(x)\cdot \nabla v(x)\Big)  }\;dx  \\
& = \iiint_{\Theta_\delta}{}{ |\nabla g_\beta(x)|^2\cdot \overline{u(x)} v(x)  }x +
\iiint_{\Theta_\delta}{}{ |g_\beta(x)|^2\cdot \Big(\overline{\nabla u(x)}\cdot \nabla v(x)} \Big)  \;dx \\
& \quad+ \iiint_{\Theta_\delta}{}{ \overline{u(x)}g_\beta(x)\Big(\nabla g_\beta(x)\cdot \nabla v(x)} \Big)  \;dx +
\iiint_{\Theta_\delta}{}{ g_\beta(x) v(x)\Big(\nabla \overline{u(x)} \cdot \nabla g_\beta(x) \Big)  }\;dx\,,
\end{align*}
 % -------------- %
hence,
 % -------------- %
\begin{multline*}
\langle{\nabla \phi,\nabla\psi}\rangle_{L^2(\Theta_\delta)} - \beta \iint_{S}{}{\overline{\phi}\cdot\psi}\;ds - E_j(\beta)\delta_{j,l}\\
\begin{aligned}
&= - \langle{\nabla u,\nabla v}\rangle_{L^2(\R^n\setminus \Theta_\delta)}
+\iiint_{\Theta_\delta}{}{ |\nabla g_\beta(x)|^2\cdot \overline{u(x)} v(x)  }\;dx\\
&\quad+ \iiint_{\Theta_\delta}{}{ \Big(|g_\beta(x)|^2-1\Big)\cdot \Big(\overline{\nabla u(x)} \cdot \nabla v(x) \Big)  }\;dx
+\iiint_{\Theta_\delta}{}{ \overline{u(x)}g_\beta(x)\Big(\nabla g_\beta(x)\cdot \nabla v(x) \Big)  }\;dx\\
&\quad + \iiint_{\Theta_\delta}{}{ g_\beta(x)v(x)\Big(\nabla \overline{u(x)} \cdot \nabla g_\beta(x) \Big)  }\;dx
=:I_1+I_2+I_3+I_4+I_5.
\end{aligned}
\end{multline*}
 % -------------- %
Denote by $h$ and $f$ the functions in $L^2(S)$ corresponding to $u$ and $v$ as in Eq.~\eqref{eq_gammafield}.
Using \eqref{eq-udu} and $\dist(x,S)\geq \delta$ for $x\notin\Theta_\delta$ we get with $\|h\|_1:=\|h\|_{L^1(S)}$
and $\alpha:=-(\sqrt{-\lambda}+\sqrt{-\mu})<0$
 % -------------- %
\begin{align*}
\left| \langle{\nabla u,\nabla v}\rangle_{L^2(\R^3\setminus \Theta_\delta)}  \right|&
\leq
\iiint_{\R^3\setminus \Theta_\delta}{}{ \big| \nabla u(x) \big| \cdot \big| \nabla v(x) \big| }\;dx \\
& \leq \frac{\|h\|_1 \|f\|_1}{16\pi^2 \delta^2} \left(\sqrt{-\lambda}+\frac{1}{\delta}\right)
 \left(\sqrt{-\mu}+\frac{1}{\delta}\right) \iiint_{\R^3\setminus \Theta_\delta}{}{ e^{\alpha\,\dist(x,S)}  }\;dx .
\end{align*}
 % -------------- %
The integral on the right-hand side can be estimated in the following way: we choose $R>1$ such that $\Theta_1$ is contained in the ball $B_R$ of radius $R$ around zero, then we have for $\delta\leq1$
 % -------------- %
\begin{multline*}
\lefteqn{\iiint_{\R^3 \setminus \Theta_\delta} e^{\alpha\, \dist(x,S)}\;dx =
\iiint_{\R^3 \setminus B_{2R}} e^{\alpha\, \dist(x,S)}\;dx + \iiint_{B_{2R} \setminus \Theta_\delta} e^{\alpha\, \dist(x,S)}\;dx} \\
\leq 4\pi \int_{2R}^\infty e^{\alpha(r-R)}r^2\;dr + e^{\alpha\delta}\frac{4}{3}\pi(2R)^3
 = 8\pi e^{\alpha R}\frac{1-2\alpha R +2\alpha^2 R^2}{-\alpha^3} + e^{\alpha\delta}\frac{32}{3}\pi R^3=:A.
\end{multline*}
 % -------------- %
For large $\beta$ we have due to \eqref{L_estim}
 % -------------- %
$$
-2L_2\beta \leq \alpha = -(\sqrt{-\lambda}+\sqrt{-\mu}) \leq -2 L_1 \beta
\qquad\text{and}\qquad
\frac{(1 + 2 L_2 \beta)^2}{(2 L_1 \beta)^3}\leq\frac{2}{3}.
$$
Since $\alpha$ is negative it follows
\begin{align*}
\frac{1-2\alpha R +2\alpha^2 R^2}{-\alpha^3}
\leq
\frac{2R^3(1-\alpha)^2}{-\alpha^3}
\leq 2R^3 \frac{(1 + 2 L_2 \beta)^2}{(2 L_1 \beta)^3}\leq\frac{4R^3}{3} .
\end{align*}
which implies $A\leq c(e^{\alpha R} + e^{\alpha\delta})$ with $c:= \frac{32}{3}\pi R^3$. Hence,
% -------------- %
$$
|I_1|
\leq \frac{ \|h\|_1 \|f\|_1}{16\pi^2 \delta^2 } \left(\sqrt{-\lambda} + \frac{1}{\delta}\right)
\left(\sqrt{-\mu} + \frac{1}{\delta}\right) c \big(e^{\alpha R} + e^{\alpha\delta}\big)\,.
$$
 % -------------- %

Now note that $\supp\;\nabla g_\beta \subset \Omega(\beta)$ and $\Omega(\beta)\cap \Theta_d = \emptyset$, hence
 % -------------- %
\begin{multline*}
|  I_2|
\leq
\iiint_{\Theta_\delta}{}{ |\nabla g_\beta(x)|^2\cdot |u(x)|\cdot |v(x)|  }\;dx \\
\leq
\iiint_{\Omega(\beta)}{}{ |\nabla g_\beta(x)|^2\cdot
\frac{ e^{-\sqrt{-\lambda}\,\dist(x,S)} }{4\pi\dist(x,S)}\, \|h\|_1 \cdot
\frac{ e^{-\sqrt{-\mu}\,\dist(x,S)} }{4\pi\dist(x,S)}\, \|f\|_1   }\;dx \\
 \leq
\frac{\|h\|_1\|f\|_1}{16\pi^2}  \cdot \frac{ e^{\alpha d} }{d^2}
\iiint_{ \Omega(\beta) }{}{ |\nabla g_\beta(x)|^2  }\;dx.
\end{multline*}
 % -------------- %
As for $I_3$, we note that $g_\beta(x)=1$ iff $\dist(x,S)\leq d$, and moreover, $g_\beta(x)\leq 1$ for all $x$.
This gives
 % -------------- %
\begin{align*}
|I_3|&
\leq \iiint_{\Theta_\delta}{}{ \! \Big||g_\beta(x)|^2-1\Big|\cdot |\nabla u(x)|\cdot |\nabla v(x)|  }\;dx \\
& \leq
\iiint_{\Theta_\delta} \Big(1-|g_\beta(x)|^2\Big)\cdot
\frac{1}{4\pi}\left( \sqrt{-\lambda} + \frac{1}{\dist(x,S)}\right) \frac{ e^{-\sqrt{-\lambda}\,\dist(x,S)} }{\dist(x,S)} \,\|h\|_1 \\
& \qquad  \cdot
\frac{1}{4\pi}\left( \sqrt{-\mu}+\frac{1}{\dist(x,S)}\right) \frac{ e^{-\sqrt{-\mu}\,\dist(x,S)} }{\dist(x,S)}\, \|f\|_1 \;dx \\
& =
\frac{\|h\|_1\|f\|_1}{16\pi^2}
\iiint_{\Theta_\delta \setminus \Theta_d}  \Big(1-|g_\beta(x)|^2\Big)\cdot  \left( \sqrt{-\lambda}+\frac{1}{\dist(x,S)}\right) \\
& \qquad  \cdot
\left(\sqrt{-\mu}+\frac{1}{\dist(x,S)}\right) \frac{ e^{\alpha\, \dist(x,S)} }{\dist(x,S)^2}\;dx\\
& \leq
\frac{\|h\|_1\|f\|_1}{16\pi^2} \left( \sqrt{-\lambda}+\frac{1}{d}\right) \cdot
\left(\sqrt{-\mu}+\frac{1}{d}\right)
\frac{ e^{\alpha d} }{d^2}
\iiint_{\Theta_\delta\setminus \Theta_d} \big( 1-|g_\beta(x)|^2 \big)\;dx \,.
\end{align*}
 % -------------- %
Using $\supp\,\nabla g_\beta \cap \Theta_d = \emptyset$ and $|g_\beta(x)|\leq 1$ we obtain
 % -------------- %
\begin{align*}
|I_4|&
\leq \iiint_{ \Omega(\beta) }{}{ |u(x)| \cdot |\nabla g_\beta(x)| \cdot |\nabla v(x)|  }\;dx \\
& \leq
\iiint_{ \Omega(\beta) }
\frac{ e^{-\sqrt{-\lambda}\,\dist(x,S)} }{4\pi\dist(x,S)}\, \|h\|_1 \cdot |\nabla g_\beta(x)| \cdot
\frac{1}{4\pi}\left( \sqrt{-\mu}+\frac{1}{\dist(x,S)}\right) \frac{ e^{-\sqrt{-\mu}\,\dist(x,S)} }{\dist(x,S)}\, \|f\|_1  \;dx \\
& \leq
\frac{\|h\|_1\|f\|_1}{16\pi^2}  \left(\sqrt{-\mu}+\frac{1}{d}\right)
\frac{ e^{\alpha d}}{d^2}   \iiint_{ \Omega(\beta) }{}{  |\nabla g_\beta(x)|   }\;dx\,,
\end{align*}
 % -------------- %
and, similarly,
 % -------------- %
\[
|I_5|\leq \frac{\|h\|_1\|f\|_1}{16\pi^2}  \left(\sqrt{-\lambda}+\frac{1}{d}\right)
\frac{ e^{\alpha d} }{d^2}   \iiint_{ \Omega(\beta) }{}{  |\nabla g_\beta(x)|   }\;dx \,.
\]
 % -------------- %
Combining now all these estimates and realizing that $\delta>d>1/\beta$ for large $\beta$, so
$\sqrt{-\lambda}+\frac{1}{\delta}\leq \sqrt{-\lambda}+\frac{1}{d}\leq (L_2+1)\beta$ and analogically for $\mu$, we get
 % -------------- %
\begin{multline*}
B := \Bigg|
\langle{\nabla \phi,\nabla\psi}\rangle_{L^2(\Theta_\delta)} - \beta \iint_{S}{}{\overline{\phi}\cdot\psi}\;d\sigma -  \delta_{j,l} E_j(\beta)
\Bigg| \\
\begin{aligned}
& \leq
\frac{\|h\|_1\|f\|_1}{16\pi^2} \,(L_2+1)^2
\bigg[
\frac{\beta^2}{\delta^2}\,c (e^{\alpha R} + e^{\alpha \delta})
  +\,\frac{1}{(L_2+1)^2}\frac{e^{\alpha d}}{d^2} \iiint_{\Omega(\beta)} |\nabla g_\beta(x)|^2\; dx \\
& \quad +\,\beta^2\, \frac{e^{\alpha d}}{d^2} \iiint_{\Theta_\delta \setminus \Theta_d} (1-|g_\beta(x)|^2) \; dx
+ 2\beta\frac{e^{\alpha d}}{d^2} \iiint_{\Omega(\beta)} |\nabla g_\beta(x)|\;dx
\bigg].
\end{aligned}
\end{multline*}
 % -------------- %
For sufficiently large $\beta$ there holds
 % -------------- %
$$
\iiint_{\Theta_\delta \setminus \Theta_d} (1-|g_\beta(x)|^2) \leq \vol (\Theta_\delta) \leq 1 \,.
$$
 % -------------- %
Using further Lemma~\ref{lem:estim_nablag} we can proceed with the estimate
\begin{align*}
B\leq\,  \frac{\|h\|_1\|f\|_1}{16\pi^2}(L_2+1)^2 \cdot
\bigg[ \frac{c}{\delta^2}\beta^2\, \big(e^{\alpha R}+e^{\alpha\delta}\big)
 + \frac{1}{(L_2+1)^2}\,
\frac{e^{\alpha d}}{d^2}
\,C_2 \beta \log\beta  + \beta^2\, \frac{e^{\alpha d}}{d^2}
+ 2 \beta \frac{e^{\alpha d}}{d^2} \,C_1 \bigg] .
%  =: B_2.
\end{align*}
Using Lemma~\ref{h_estim} and the Cauchy-Schwarz inequality we obtain $\|h\|_1= \mathcal{O}(\beta^2)$ and $\|f\|_1 = \mathcal{O}(\beta^2)$. With $\frac{1}{\delta}<\frac{1}{d}$, $e^{\alpha\delta}<e^{\alpha d}$ and $e^{\alpha(R-d)}<1$ we then get,
for sufficently large $\beta$,
 % -------------- %
$$
B
\leq
\frac{\|h\|_1\|f\|_1}{16\pi^2}(L_2+1)^2 \left[ 2c + \frac{C_2}{(L_2+1)^2} + 1 +2C_1 \right] \frac{e^{\alpha d}}{d^2}\,\beta^2
\leq
C \frac{e^{\alpha d}}{d^2}\,\beta^6
$$
for a certain $C>0$.
 % -------------- %
Substituting $\alpha:=-(\sqrt{-\lambda}+\sqrt{-\mu})\leq - 2 L_1 \beta$ and $d=k\beta^{-1}\log\beta$ we obtain
 % -------------- %
$$
\frac{e^{\alpha d}}{d^2}\beta^6 \leq k^{-2} \beta^{8-2L_1 k} (\log\beta)^{-2},
$$
and the result of Lemma ~\ref{lem:Lemma16} follows by taking $k$ sufficiently large.\hfill$\square$

\subsection{Proof of Lemma \ref{lem:QuasiONB}}\label{lemonb}

We know that $\langle{u_{j,\beta},  u_{l,\beta} }\rangle_{L^2(\R^3)} = \delta_{j,l}$, hence we can estimate the expression in question as
 % -------------- %
\begin{multline*}
A:=\Big| \langle{u_{j,\beta}, u_{l,\beta}  }\rangle_{L^2(\R^3)} - \langle{  \phi_{j,\beta},  \phi_{l,\beta}  }\rangle_{L^2(\Theta_\delta)}  \Big|
\\
 = \left|  \iiint_{\R^3}   \big(1-g_\beta(x)^2\big) \overline{u_{j,\beta}(x)}  u_{l,\beta}(x)  \;dx \right|
\leq
\iiint_{\R^3\setminus \Theta_d}   |\overline{u_{j,\beta}(x)}  u_{l,\beta}(x) | \;dx .
\end{multline*}
As in the previous proof, set $\|h\|_1:=\|h\|_{L^1(S)}$ and $\|f\|_1:=\|f\|_{L^1(S)}$.
In view of Eqs. \eqref{eq-udu} and \eqref{L_estim}, for large $\beta$
we have
\begin{align*}
A
&\leq
\iiint_{\R^3\setminus \Theta_d}
\frac{ e^{-\sqrt{-E_j(\beta)}\,\dist(x,S)} }{4\pi\,\dist(x,S)} \,\|h\|_1 \cdot
\frac{ e^{-\sqrt{-E_l(\beta)}\,\dist(x,S)} }{4\pi\,\dist(x,S)} \,\|f\|_1 \;dx \\
&\leq
\frac{  \|h\|_1\|f\|_1  }{16\pi^2 d^2}
\iiint_{\R^3\setminus \Theta_d}
e^{-\big( \sqrt{-E_j(\beta)}+\sqrt{-E_l(\beta)} \big) \dist(x,S)}  \;dx
\leq
\frac{  \|h\|_1\|f\|_1  }{16\pi^2 d^2}
\iiint_{\R^3\setminus \Theta_d}       e^{-2L_1\beta\,  \dist(x,S)}  \;dx\,.
\end{align*}
 % -------------- %
Mimicking the proof of Lemma~\ref{lem:Lemma16} one can check that
 % -------------- %
$$
\iiint_{\R^3 \setminus \Theta_d} e^{-2L_1\beta\, \dist(x,S)}\; dx \leq c \big(e^{-2L_1\beta R} + e^{-2L_1\beta d}\big)
$$
 % -------------- %
with an $R>1$ such that $\Theta_1$ is contained in the ball $B_R$ of radius $R$ centered at zero.
By Lemma~\ref{h_estim} we may estimate $\|h\|_{L^1(S)}\leq C\beta^2$ and $\|f\|_{L^1(S)}\leq C \beta^2$. As $d=\frac{k\log\beta}{\beta}$ we get
 % -------------- %
\begin{multline*}
| \langle{  \phi_{j,\beta}  ,  \phi_{l,\beta}  }\rangle_{L^2(\Theta_\delta)}  - \delta_{j,l}|
 \leq \frac{c}{16\pi^2 d^2}  \|h\|_{L^1(S)}\|f\|_{L^1(S)}  \, \big(e^{-2L_1\beta R} + e^{-2L_1\beta d}\big) \\
 \leq \frac{c C^2 \beta^6}{16\pi^2 k^2 (\log \beta)^2} \big(e^{-2L_1\beta R} + \beta^{-2L_1 k}\big),
 \end{multline*}
 and the result follows by taking $k$ sufficiently large.\hfill$\square$

%%%%%%%%%%%%%%%%%%%%%%%%%%%%

%%%%%%%%%%%%%%%%%%%%%%%%%%%


\begin{thebibliography}{99}

 % -------------- %
\bibitem{AGHH}
S.~Albeverio, F.~Gesztesy, R.~H\o egh-Krohn, H.~Holden: \emph{Solvable models in quantum mechanics}, 2nd edition with an appendix by P.~Exner, AMS Chelsea Publishing, Providence, R.I., 2005.
 % -------------- %
\bibitem{BEL}
J. Behrndt, P. Exner, V. Lotoreichik: \emph{Essential spectrum of Schr\"{o}dinger operators with $\delta$-interactions on the union of compact Lipchitz hypersurfaces}, Proc. Appl. Math. Mech. {\bf 13} (2013), 523--524.
 % -------------- %
 \bibitem{BLL13}
J.~Behrndt, M.~Langer, V.~Lotoreichik, {\it Schr\"odinger operators with $\delta$ and $\delta'$-potentials supported on hypersurfaces}, Ann.\ Henri Poincar\'e 14 (2013), 385--423.
 %--------------- %
\bibitem{BK}
G. Berkolaiko, P. Kuchment: \emph{Introduction to quantum graphs}, Amer. Math. Soc., Providence, R.I., 2013.
 % -------------- %
\bibitem{bd}
J. Brasche, M. Demuth: \emph{Dynkin's formula and large coupling convergence,} J. Funct. Anal. {\bf 219}:1 (2005), 34--69.
 % -------------- %
\bibitem{BEKS} J. F. Brasche, P. Exner, Yu. A. Kuperin, P. \v Seba: \emph{Schr\"odinger operators with singular ineractions,} J. Math. Anal. Appl. {\bf 184} (1994), 112--139.
 % -------------- %
\bibitem{vbr}
V. Bruneau, G. Carbou: \emph{Spectral asymptotic in the large coupling limit,} Asymp. Anal. {\bf 29}:2 (2002), 91--113.
 % -------------- %
\bibitem{DEK}
P. Duclos, P. Exner, D. Krej\v ci\v r\'{\i}k:
\emph{Bound states in curved quantum layers.} Comm. Math. Phys. {\bf 223} (2001), 13--28.
 % -------------- %
\bibitem{Ex08}
P.~Exner: {\em Leaky quantum graphs: a review}, Proceedings of the Isaac Newton Institute programme ``Analysis on Graphs and Applications'', AMS ``Proceedings of Symposia in Pure Mathematics'' Series, vol.~77, Providence, R.I., 2008; pp.~523--564.
 % -------------- %
\bibitem{ex}
P. Exner: \emph{Spectral properties of Schr\"odinger operators with a strongly attractive $\delta$ interaction supported by a surface,} in ``Waves in Periodic and Random Media'' (P.~Kuchment, ed.), Contemporary Mathematics, vol.~339, AMS, Providence, R.I., 2003; pp.~25--36.
 % -------------- %
 \bibitem{EI}
  P. Exner, T. Ichinose: \emph{Geometrically induced spectrum in curved leaky wires}, J. Phys.  A {\bf 34}  (2001), 1439--1450.
 % -------------- %
\bibitem{exj} P. Exner, M. Jex: \emph{Spectral asymptotics of a strong $\delta'$ interaction supported by a surface.}
Phys. Lett. A {\bf 378} (2014) 2091--2095.
 % -------------- %
\bibitem{EK}
P.~Exner, S.~Kondej: {\em Strong-coupling asymptotic expansion for Schr\"odinger operators with a singular interaction supported by a curve in $\mathbb{R}^3$}, Rev. Math. Phys. {\bf 16} (2004), 559--582.
 % -------------- %
\bibitem{EP}
P. Exner, K. Pankrashkin: \emph{Strong coupling asymptotics for a singular Schr\"odinger operator with an interaction supported by an open arc,} Commun. PDE {\bf 39} (2014), 193--212.
 % -------------- %
\bibitem{EY}
P. Exner, K. Yoshitomi: \emph{Asymptotics of eigenvalues of the Schr\"odinger operator with a strong $\delta$-interaction on a loop,} J. Geom. Phys. {\bf 41} (2002), 344--358.
%
\bibitem{EY2}
P. Exner, K. Yoshitomi: \emph{Band gap of the Schr\"{o}dinger operator with a strong $\delta$-interaction on a periodic curve,}
Ann. Henri Poincar\'{e} {\bf 2} (2001) 1139--1158.
 % -------------- %
\bibitem{grinfeld}
P. Grinfeld: \emph{Hadamard's formula inside and out,} J.~Optim. Theory Appl. {\bf 146} (2010) 654--690.
 % -------------- %
 \bibitem{kato}
T.~Kato: \emph{Perturbation theory for linear operators. Corr. printing of the 2nd ed.}  ``Die Grundlehren der mathematischen Wissenschaften'', vol.~132, Springer, Berlin Heidelberg New York 1980.
 %
 \bibitem{kuch}
P.~Kuchment: \emph{Quantum graphs \textup{I.} Some basic structures,} Waves Random Media {\bf 14} (2004), S107--S128.
 % -------------- %
%\bibitem{RS2}
%M. Reed, B. Simon: \emph{Method of modern mathematical physics. II. Fourier analysis, self-adjointnes,} Academic Press, New York 1975.
 % -------------- %
 \bibitem{Loto}
 V. Lotoreichik, J. Rohleder: \emph{An eigenvalue inequality for Schr\"{o}dinger operators with $\delta$- and $\delta'$-interactions supported by hypersurfaces}, Operator Theory: Advances and Applications
 {\bf 247} (2015), 173--184.
 % -------------- %
 \bibitem{Mant}
 A. Mantile, A. Posilicano, M. Sini: \emph{Self-adjoint elliptic operators with boundary conditions on not closed hypersurfaces}, arXiv: 1505.07236.
  % -------------- %
\bibitem{sim}
B.~Simon: \emph{A canonical decomposition for quadratic forms with applications to monotone convergence theorems.} J. Funct. Anal. {\bf 28}:3 (1978), 377--385.
%---------------------
\bibitem{W}
J.~Weidmann: \emph{Stetige Abh\"angigkeit der Eigenwerte und Eigenfunktionen elliptischer Differentialoperatoren vom Gebiet}, Math. Scand. {\bf 54} (1984), 51--69.
%
 \bibitem{W2}
J. Weidmann: \emph{Continuity of the eigenvalues of self-adjoint operators with respect to the strong operator topology,} Integral Equations Oper. Theor. {\bf 3} (1980), 138--142.
 % -------------- %

\end{thebibliography}
\end{document}